\newcommand{\homepage}[1]{{\textit{Web:} \texttt{#1}}}
\newcommand{\tmem}[1]{{\em #1\/}}
\newcommand{\tmop}[1]{\ensuremath{\operatorname{#1}}}
\newcommand{\tmtextbf}[1]{{\bfseries{#1}}}
\newcommand{\tmtextit}[1]{{\itshape{#1}}}
\newcommand{\tmtexttt}[1]{{\ttfamily{#1}}}
\newenvironment{enumeratealpha}{\begin{enumerate}[a{\textup{)}}] }{\end{enumerate}}
\newenvironment{enumeratenumeric}{\begin{enumerate}[1.] }{\end{enumerate}}
\newenvironment{itemizedot}{\begin{itemize} }{\end{itemize}}
\newenvironment{proof}{\noindent\textbf{Proof\ }}{\hspace*{\fill}$\Box$\medskip}
\definecolor{grey}{rgb}{0.75,0.75,0.75}
\definecolor{orange}{rgb}{1.0,0.5,0.5}
\definecolor{brown}{rgb}{0.5,0.25,0.0}
\definecolor{pink}{rgb}{1.0,0.5,0.5}
\newtheorem{corollary}{Corollary}
\newtheorem{lemma}{Lemma}
\newtheorem{theorem}{Theorem}
\begin{document}

\title{Notes on Bit-reversal Broadcast Scheduling}\author{Marcin
Kik\thanks{\homepage{http://www.im.pwr.wroc.pl/\~{ }kik/}}\\
Faculty of Fundamental Problems of Technology\\
Wroc{\l}aw University of Technology\\
ul. Wybrze\.ze Wyspia\'nskiego 27\\
PL-50-370 Wroc{\l}aw\\
Poland}\maketitle

\begin{abstract}
  This report contains revision and extension of some results about RBO from
  {\cite{DBLP:journals/corr/abs-1108-5095}}.
  RBO is a simple and efficient broadcast scheduling of $n = 2^k$ uniform
  frames for battery powered radio receivers. Each frame contains a key from
  some arbitrary linearly ordered universe. The broadcast cycle -- a sequence
  of frames sorted by the keys and permuted by $k$-bit reversal -- is
  transmitted in a round robin fashion by the broadcaster. \ At arbitrary time
  during the transmission, the receiver may start a simple \ protocol that
  reports to him all the frames with the keys that are contained in a
  specified interval of the key values $[\kappa', \kappa'']$. RBO receives at
  most $2 k + 1$ other frames' keys before receiving \ the first key from
  $[\kappa', \kappa'']$ or noticing that there are no such keys in the
  broadcast cycle. As a simple corollary, \ $4 k + 2$ \ is upper bound the
  number of keys \ outside $[\kappa', \kappa'']$ that will ever be received.
  In unreliable network the expected number of efforts to receive such frames
  is bounded by $(8 k + 4) / p + 2 (1 - p) / p^2$, where $p$ is probability of
  successful reception, and the reception rate of the frames with the keys in
  $[\kappa', \kappa'']$ is $p$ -- the highest possible.
  
  The receiver's protocol state consists of the values $k$, $\kappa'$ and
  $\kappa''$, one wake-up timer and two other $k$-bit variables. Its only
  nontrivial computation -- the computation of the next wake-up time slot --
  can be performed in $O (k)$ simple operations, such as arithmetic/bit-wise
  operations on $k$-bit numbers, using only constant number of \ $k$-bit
  variables.
\end{abstract}

\section{Introduction}

RBO \ {\cite{DBLP:journals/corr/abs-1108-5095}} is a simple and efficient
method of periodic broadcasting of a large sequence of uniform radio messages
for radio receivers with a limited source of energy. Examples of such
receivers are battery powered sensors or portable devices. In modern devices,
the receiver can save the energy by keeping it's radio device switched off for
long periods of time.

The broadcaster transmits in a round robin fashion a large sequence of frames.
Such sequence is called a {\tmem{broadcast cycle}}. Each frame is of the same
length (we call it a {\tmem{time slot}}) and contains in its header a
{\tmem{key}} from an arbitrary linearly ordered universe of key values.

The receiver may decide at arbitrary time (usually somewhere in the middle of
the broadcast cycle) to locate and receive all the frames \ in the stream that
\ contain the keys from some specified range $[\kappa', \kappa'']$. The
receiver may {\tmem{wake-up}} (switch on its radio) at arbitrary time slot to
receive the transmitted frame. However, the radio consumes energy while it is
switched on. We want to minimize the energy dissipated by the receiver, i.e.
to minimize the number of the wake-ups. In RBO, the receiver is able to
receive all the requested frames transmitted since that moment. Roughly
speaking: the receiver listens to some keys of the broadcast cycle and learns
the interval of positions in the sorted sequence with the keys in $[\kappa',
\kappa'']$. After that, it only listens in the time slots that contain the
keys from these positions.

RBO requires that the length of the broadcast cycle is an integer power of
two. This can be achieved by duplicating some of the frames. If $n'$ denotes
the number of frames that must be transmitted, then the length of the
broadcast cycle is $n = 2^k$, where for integer $k$, $k \geq \lceil \log_2 n' \rceil$.

We assume that the length of each frame is the same, i.e. a single time slot.
However, the same key may be repeated many times in the broadcast cycle. Thus,
as single long information attributed with some key can be split among many
frames with the same key. We can also repeat many times, the frames that that
should be delivered more frequently to the receivers. (The frames with the
same key are scattered uniformly over the transmission cycle).

The keys may be arbitrary values from arbitrary linearly ordered domain.
The receiver does not have any knowledge of the distribution of the keys in
the cycle. \ RBO is energetically efficient for the receiver
(Section~\ref{Section-analysis}), robust to the radio interferences
(Section~\ref{Section-unreliable}), and its implementation is very simple and
efficient and requires little memory (Section~\ref{Section-implementation}),
thus it is suitable even for very weak sensor devices (see e.g.
{\cite{TinyOSProgramming}}).

This report updates \ {\cite{DBLP:journals/corr/abs-1108-5095}} as follows:
\begin{itemizedot}
  \item New, simpler  proof of the main theorem (Theorem~\ref{first-hit-Theorem}) is
  based on a simpler decomposition of the time-slots sequence.
  
  \item We focus on the application of the RBO to  filtering the frames with
  the keys from specified interval $[\kappa', \kappa'']$. In
  Corollary~\ref{Corollary-ee} we show that the receiver has to listen to no
  more than $4 k + 2$ frames with keys outside $[\kappa', \kappa'']$, to learn
  which are the time-slots of the frames with keys in $ [\kappa', \kappa'']$.
  
  \item The expected energetic costs for the receiver in unreliable network
  has been estimated in Section~\ref{Section-unreliable}.
  
  \item A simpler and more efficient algorithm for computing the next wake-up
  time slot has been proposed in \ Section~\ref{Section-implementation}.
\end{itemizedot}

\subsection{Example Applications}

The protocol can be applied to the dissemination of information or to
centralized controlling or synchronizing of large populations of energy
constrained devices. Some examples are following:
\begin{itemize}
  \item The keys may be identifiers of records from a huge database
  transmitted in the stream.
  
  \item The keys may be identifiers of the receiver. The broadcaster may send
  commands or messages to individual receivers.
  
  \item The keys may be identifiers of groups of mutually non-interfering
  sensors. Each frame with such key would contain only the header, while the
  rest of the time slot can be used for transmission by the sensors from this
  group.
  
  \item The keys may be coordinates of the objects on the plane encoded by
  Morton z-ordering {\cite{ZorderMorton}}. In such ordering the receiver may
  limit an approximately square region containing the objects that are
  interesting to him.
  
\end{itemize}
 Diverse applications could be mixed within a single stream of frames
  by assigning to them disjoint intervals of key values. The sorted sequence
  of keys is permuted by bit-reversal permutation, \ which scatters the keys
  from each interval uniformly over the whole stream.

\subsection{ Related Work}

Broadcast scheduling for radio receivers with low {\tmem{access time}} (i.e. the
delay to the reception of the required record) and low average {\tmem{tuning
time}} (i.e. the energetic cost) was considered by Imielinski, Viswanathan, and
Badrinath (see e.g. {\cite{DBLP:conf/sigmod/ImielinskiVB94}},
{\cite{DBLP:conf/edbt/ImielinskiVB94}},
{\cite{DBLP:journals/tkde/ImielinskiVB97}}). In
{\cite{DBLP:conf/edbt/ImielinskiVB94}}, {\tmem{hashing}} and {\tmem{flexible
indexing}} for finding single records in broadcast cycle have been proposed
and compared. In {\cite{DBLP:journals/tkde/ImielinskiVB97}}, a distributed
index based on a ordered balanced tree has been proposed. The broadcast
sequence consists of two kinds of {\tmem{buckets}}. Groups of {\tmem{index
buckets}}, containing parts of the index tree, are interleaved with the groups
of {\tmem{data buckets}} containing proper data and a pointer (i.e. time
offset) to the next index bucket. Each group of index buckets consists of the
copy of upper part of the index tree \ together with the relevant fragment of
the lower part of the tree. This mechanism has found useful application even
in more complex scenarios of delivering data to mobile users
{\cite{DBLP:journals/tods/DattaVCK99}}.

Khanna and Zhou {\cite{Khanna2000575}} proposed a sophisticated version of
the index tree aimed at minimizing {\tmem{mean}} access and tuning time, for
given probability of each data record being requested. The broadcast cycle
contains multiple copies of data items, so that spacing between copies of each
item is related to the optimal spacing, minimizing mean access time derived in
{\cite{DBLP:journals/winet/VaidyaH99}}. However the keys are not arbitrary.
The \ key of the item is determined by its probability of being requested. \

Indexing of broadcast stream for \ XML documents
{\cite{DBLP:journals/isci/ChungL07}} or for full text search
{\cite{DBLP:journals/tkde/ChungYK10}} have also been considered.

If the broadcast cycle contains indexing tree structure, then \ the reception
of data in current broadcast cycle depends on the successful reception of the
path to this data. Instead of separate index buckets RBO uses short
{\tmem{headers}} of the frames. Each such header contains the key assigned to
the frame. As a consequence, in unreliable network the receiver has much more
chances of efficient navigation towards the desired frames.

In practical applications, due to imperfect synchronization between the
broadcaster and the receiver, \ the header should also contain either \ the
time-slot number or its bit reversal -- the index of the frame. To enable
changing the contents and the length of the sequence of the transmitted keys
by the broadcaster, the header may also include the parameter $k$, such that
$2^k$ is the length of the broadcast cycle, and some bits used to notify the
receiver that the that the sequence of keys has been changed. For RBO, these
issues have been discussed in \ {\cite{DBLP:journals/corr/abs-1108-5095}}.

Recall that each step of the classic {\tmem{binary search}} algorithm actually
clips the interval of the possible locations of the searched key in the sorted
sequence of keys. The customary presentation is that the keys of the sequence
are organized in a balanced {\tmem{binary search tree}}, and the searched key
is compared with a sequence of keys from subsequent levels of this tree. \
Bit-reversal permutes the sorted sequence of keys so that the broadcast cycle
is a sequence of the subsequent levels of a balanced binary search tree for
the keys. Moreover, each level is recursively so permuted. We show that it
enables efficient search in the periodic transmission of the broadcast cycle
even if the search is started at arbitrary time slot. We also exploit this
property in the computation of the next time slot that should be listened by
the receiver. Bit-reversal permutation has been found useful \ in many
contexts. Some examples of its applications are in FFT algorithm
{\cite{CooleyTukey}} {\cite{CormenLR89}}, lock-free extensible hash arrays
{\cite{DBLP:journals/jacm/ShalevS06}} distributed arrays in P2P
{\cite{DBLP:conf/infocom/FukuchiSSH09}}, address mapping in SDRAM
{\cite{DBLP:conf/scopes/ShaoD05}}, scattering of video bursts in transmission
scheduling in mobile TV {\cite{DBLP:journals/ton/HefeedaH10}}. In RBO,
bit-reversal emerged from updating the recursive definition of the
$\tmop{rbo}$ permutation used in the underlying ranking procedure in
{\cite{DBLP:conf/adhoc-now/Kik08}} in such a way that zero became a fixed
point. The simplicity of bit-reversal computation is a great advantage for
practical implementations.

\section{Notation and preliminaries}

Let $\mathbbm{Z}$ denote the set of integers. Let $\mathbbm{R}$ denote the set
of real numbers. For simplicity and generality, we assume that the keys are
from $\mathbbm{R}$. By \ $[a, b]$ we denote the interval of real numbers $\{x
\in \mathbbm{R}| a \leq x \leq b\}$. If $a > b$ then $[a, b] = \emptyset$. By
$[[a, b]]$ we denote we denote $[a, b] \cap \mathbbm{Z}$ (i.e. interval of
integers between $a$ and $b$). For a set $S$, we denote the number
of its elements by $|S|$.

For $x \in \mathbbm{Z}$, $x \geq 0$, for $i \geq 0$, let $\tmop{bit}_i (x)$ be
the $i$th least significant bit of the binary representation of $x$, i.e.
$\tmop{bit}_i (x) = \lfloor (x \tmop{mod} 2^{i + 1}) / 2^i \rfloor$. For $l
\geq 0$, a number with binary representation $x_l \ldots x_0$ is denoted by
$(x_l, \ldots, x_0)_2$, i.e. $(x_l, \ldots, x_0)_2 = \sum_{i = 0}^l 2^i \cdot
x_i$.

For $x \in [[0, 2^k - 1]]$ let $\tmop{rev}_k (x)$ denote the {\em bit-reversal} of
$x$, i.e: if $x_i = \tmop{bit}_i (x)$ then $x = (x_{k - 1}, x_{k - 2}, \ldots,
x_0)_2$ \ and $\tmop{rev}_k (x) = (x_0, x_1, \ldots, x_{k - 1})_2$.

For a set $S \subseteq [[0, 2^k - 1]]$, $\tmop{rev}_k S$ denotes the
{\tmem{image}} of $S$ under $\tmop{rev}_k$, i.e $\tmop{rev}_k S
=\{\tmop{rev}_k (x) | x \in S\}$.

Let $n$ denote the length of the broadcast cycle, $n = 2^k$, \ for integer $k
\geq 0$. Let $\kappa_{- 1}, \kappa_0, \ldots, \kappa_{n - 1}, \kappa_n$ be a
sequence defined as follows:
\begin{itemizedot}
  \item $\kappa_{- 1} = - \infty$
  
  \item $\kappa_n = + \infty$
  
  \item $\kappa_0, \ldots, \kappa_{n - 1}$ is a sorted sequence of $n$ \
  finite real values of the keys (i.e. $\kappa_i \leq \kappa_{i + 1}$, for \ \
  \ $- 1 \leq i \leq n - 1$).
\end{itemizedot}
Let $\tmop{KEYS} =\{\kappa_0, \ldots, \kappa_{n - 1} \}$ (the set of the
values of the keys in the sequence).

Let $\kappa'$ and $\kappa''$ be finite real key values such that $\kappa' \leq
\kappa''$. $[\kappa', \kappa'']$ is the {\em interval of the searched keys}.

$E [X]$ denotes expected value of random variable $X$.

\subsection{The description of the protocol}

The broadcaster at time-slot $t$ broadcasts the frame with the key
$\kappa_{\tmop{rev}_k (t \tmop{mod} n)}$. The receiver searching for the
$[\kappa', \kappa'']$ has two variables $\tmop{lb}$ and $\tmop{ub}$
initialized to $0$ and \ $n - 1$, respectively. The receiver may start at
arbitrary time slot $s$, and executes the following algorithm:
\begin{itemize}
\item While $\tmop{lb} \leq \tmop{ub}$:
\begin{itemize}
  \item In time-slot $t$ if $\tmop{lb} \leq \tmop{rev}_k (t \tmop{mod} n) \leq
  \tmop{ub}$, then the receiver receives the message with the key $\kappa =
  \kappa_{\tmop{rev}_k (t \tmop{mod} n)}$ and
  \begin{itemize}
    \item if $\kappa < \kappa'$ then it sets $\tmop{lb}$ to $\tmop{rev}_k (t
    \tmop{mod} n) + 1$, else
    
    \item if $\kappa'' < \kappa$ then it sets $\tmop{ub}$ to $\tmop{rev}_k (t
    \tmop{mod} n) - 1$, else
    
    \item if $\kappa' \leq \kappa\leq \kappa''$ then it reports reception of 
     the key $\kappa$ from $[\kappa', \kappa'']$
  \end{itemize}
  \item if $\tmop{lb} > \tmop{ub}$ then the receiver reports that $[\kappa',
  \kappa''] \cap \tmop{KEYS} = \emptyset$
\end{itemize}
\end{itemize}
In the above description we used {\tmem{broadcaster time slot numbers}}. By
{\tmem{receiver time}} we mean the number of time slots that elapsed since
the start of the receiver's protocol. Thus, just before the time slot $s$ the
receiver time is zero, just after time slot $s$ the receiver time is one, an
so on. However, the receiver knows the broadcaster time modulo $n$ (this
information may be included in the frame header) and uses it it to compute the
timer \ waking-up the radio for next reception of the frame.

\subsection{Subsets $Y_{k, s, i}$ and $X_{k, s, i}$}

In the analysis of the receiver's protocol (Section~\ref{Section-analysis}),
we split the sequence of the time slots following the starting slot $s$ into
segments $Y_{k, s, i}$. The set $X_{k, s, i}$ is the set of indexes of the
elements transmitted during time slots $Y_{k, s, i}$. We show that the
``density'' of initially transmitted indexes bounds the length of \
$[\tmop{lb}, \tmop{ub}]$ and the ``sparsity'' of the set of indexes of the
next segment bounds the number of needed receptions. Finally we sum up the
bounds on receptions in all segments. In Section~\ref{Section-implementation},
we use this decomposition and also the binary search tree on the elements of
$X_{k, s, i} $ embedded on the graph of the permutation $\tmop{rev}_k$, for
efficient computation of the wake-up timer.

For the starting time slot $s \in [[0, 2^k - 1]]$, for $i \geq 0$, let $t_{k,
s, i}$ and $l_{k, s, i}$ be defined as follows:
\begin{itemize}
  \item  $t_{k, s, 0} = t$ and \ $l_{k, s, 0} = \max \{l \leq k | t_{k, s, 0}
  \tmop{mod} 2^l = 0\}$.
  
  \item For $i > 0,$ $t_{k, s, i} = (t_{k, s, i - 1} + 2^{l_{k, s, i - 1}})
  \tmop{mod} n$ \ and $l_{k, s, i} = \max \{l \leq k | t_{k, s, i} \tmop{mod}
  2^l = 0\}.$
\end{itemize}
$l_{k, s, i}$ is the maximal length of of the suffix of the zero bits in
binary representation of $t_{k, s, i}$.  
$t_{k, s, i+1}$ is the next time slot after  $t_{k, s, i}$ (modulo $n$),
that has longer such suffix.
Note that $t_{k, s, 0}, t_{k, s, 1}, t_{k, s, 2}, \ldots$ is a (possibly
empty) increasing sequence of some integers from $[[1, 2^k - 1]]$ followed by
infinite sequence of zeroes.

Let $\tmop{last}_{k, s} = \min \{i \geq 0 | t_{k, s, i} = 0\}$. Note that \
$l_{k, s, 0}, \ldots, l_{k, s, \tmop{last}_{k, s}}$ is an increasing sequence
of integers from $[[0, k]]$. For $0 \leq i < \tmop{last}_{k, s}$, let $Y_{k,
s, i} = [[t_{k, s, i}, t_{k, s, i + 1} - 1]]$ and let $Y_{k, s, \tmop{last}} =
[[0, 2^k - 1]]$. For $0 \leq i \leq \tmop{last}_{k, s},$ let $X_{k, s, i} =
\tmop{rev}_k Y_{k, s, i}$.

\begin{lemma}
  \label{X_i-Lemma} \ $X_{k, s, i} =\{\tmop{rev}_k (t_{k, s, i}) + 2^{k -
  l_{k, s, i}} \cdot x' | x' \in [[0, 2^{l_{k, s, i}} - 1]]\}$ and
  $\tmop{rev}_k (t_{k, s, i}) < 2^{k - l_{k, s, i}}$.
\end{lemma}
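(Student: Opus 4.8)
The plan is to reduce the claim to a bitwise analysis of $\tmop{rev}_k$ on a single dyadic block. Write $t = t_{k, s, i}$ and $l = l_{k, s, i}$ for brevity. First I would pin down $Y_{k, s, i}$ as a block of $2^l$ consecutive integers: the recurrence gives $t_{k, s, i + 1} \equiv t + 2^l \pmod{n}$, and using the monotonicity noted before the lemma (together with the explicit data $Y_{k, s, \tmop{last}} = [[0, 2^k - 1]]$ and $t_{k, s, \tmop{last}} = 0$, $l_{k, s, \tmop{last}} = k$ at the wrap-around endpoint) one checks uniformly that
\[
 Y_{k, s, i} = \{\, t + x' \mid x' \in [[0, 2^l - 1]] \,\}.
\]
Since $l$ is by definition the length of the suffix of zero bits of $t$ (capped at $k$), $t$ is a multiple of $2^l$; hence every $y \in Y_{k, s, i}$ agrees with $t$ in bit positions $l, \ldots, k - 1$, while its bits in positions $0, \ldots, l - 1$ run over all of $[[0, 2^l - 1]]$ as $x'$ does.

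Next I would compute $\tmop{rev}_k (y)$ for $y = t + x'$ by splitting the reversal sum $\tmop{rev}_k (y) = \sum_{j = 0}^{k - 1} \tmop{bit}_j (y) \, 2^{k - 1 - j}$ at $j = l$. The high part ($l \leq j \leq k - 1$) uses only the bits of $t$ and reproduces $\sum_{j = l}^{k - 1} \tmop{bit}_j (t) \, 2^{k - 1 - j} = \tmop{rev}_k (t)$, since the low $l$ bits of $t$ vanish. The low part ($0 \leq j \leq l - 1$) uses only the bits of $x'$ and equals $\sum_{j = 0}^{l - 1} \tmop{bit}_j (x') \, 2^{k - 1 - j} = 2^{k - l} \, \tmop{rev}_l (x')$. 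Thus
\[
 \tmop{rev}_k (y) = \tmop{rev}_k (t) + 2^{k - l} \, \tmop{rev}_l (x').
\]

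To finish, I would observe that $\tmop{rev}_l$ is a bijection of $[[0, 2^l - 1]]$ onto itself, so as $x'$ ranges over this set the value $\tmop{rev}_l (x')$ does too; reindexing then yields $X_{k, s, i} = \tmop{rev}_k Y_{k, s, i} = \{\, \tmop{rev}_k (t) + 2^{k - l} x' \mid x' \in [[0, 2^l - 1]] \,\}$, the claimed form. The inequality $\tmop{rev}_k (t) < 2^{k - l}$ is immediate from the same splitting: with the low $l$ bits of $t$ equal to zero, $\tmop{rev}_k (t) = \sum_{j = l}^{k - 1} \tmop{bit}_j (t) \, 2^{k - 1 - j}$ is a sum whose largest exponent is $k - 1 - l$, so $\tmop{rev}_k (t) \leq 2^{k - l} - 1$.

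I expect the only real friction to be the bookkeeping in the reversal split --- keeping straight that the $l$ free low bits of $y$ land on the top $l$ positions of $\tmop{rev}_k (y)$ (hence the factor $2^{k - l}$ and the inner reversal $\tmop{rev}_l$), while the frozen high bits of $t$ fall into the bottom $k - l$ positions as $\tmop{rev}_k (t)$. A secondary point needing care is the endpoint/wrap-around verification that $Y_{k, s, i}$ is genuinely a full dyadic block of size $2^l$ for every $i$, including $i = \tmop{last}_{k, s}$.
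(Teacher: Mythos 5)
Your proposal is correct and follows essentially the same route as the paper's proof: identify $Y_{k,s,i}$ as the dyadic block of $2^{l}$ consecutive integers starting at the multiple $t_{k,s,i}$ of $2^{l}$, and observe that bit reversal sends the frozen high bits of $t_{k,s,i}$ to the low $k-l$ positions (giving $\tmop{rev}_k(t_{k,s,i}) < 2^{k-l}$) while the $l$ free low bits range over all top-$l$-bit patterns, i.e.\ over all multiples $2^{k-l}x'$ with $x' \in [[0,2^{l}-1]]$. Your version is slightly more explicit than the paper's (splitting the reversal sum, invoking bijectivity of $\tmop{rev}_l$, and checking the wrap-around block $i=\tmop{last}_{k,s}$), but the underlying argument is identical.
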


\begin{proof}
  Let $y_j = \tmop{bit}_j (t_{k, s, i})$, let $l =
  l_{k, s, i}$. Then \ $Y_{k, s, i}$ \ is the set of all numbers $(y_{k - 1},
  \ldots, y_l, y'_{l - 1}, \ldots, y'_0)_2$ such that $y'_j \in \{0, 1\}$.
  Thus $X_{k, s, i} = \tmop{rev}_k (Y_{k, s, i})$ is the set of all numbers
  $(x'_{l - 1}, \ldots, x'_0, y_l, \ldots, y_{k - 1})_2$ such that $x_j' \in
  \{0, 1\}$. Note that $\tmop{rev}_k (t_{k, s, i}) = (0, \ldots, 0, y_l,
  \ldots, y_{k - 1})_2 = (0, \ldots, 0, x_{k - l - 1}, \ldots, x_0)_2$, where
  $x_i = y_{k - 1 - i}$. Thus $\tmop{rev}_k (t_{k, s, i}) < 2^{k - l}$. This
  completes the proof of Lemma~\ref{X_i-Lemma}.
\end{proof}

For $0 \leq i \leq \tmop{last}_{k, s}$, for $0 \leq l \leq l_{k, s, i}$, let
$Y_{k, s, i, l} = [[t_{k, s, i} + \lfloor 2^{l - 1} \rfloor, t_{k, s, i} + 2^l
- 1]]$. Note that $Y_{k, s, i}$ is a disjoint union of the sets $Y_{k, s, i,
l}$, for $0 \leq l \leq l_{k, s, i}$. For $0 \leq i \leq \tmop{last}$, for $0
\leq l \leq l_{k, s, i}$, let $X_{k, s, i, l} = \tmop{rev}_k (Y_{k, s, i,
l})$.

\begin{lemma}
  \label{X_il-Lemma} For $l \in [[0, l_{k, s, i}]]$, \ $X_{k, s, i, l}
  =\{\tmop{rev}_k (t_{k, s, i} + \lfloor 2^{l - 1} \rfloor) + 2^{k - l + 1}
  \cdot x' | x' \in [[0, \lceil 2^{l - 1} \rceil - 1]]\}$ and $\tmop{rev}_k
  (t_{k, s, i} + \lfloor 2^{l - 1} \rfloor) < 2^{k - l + 1}$.
\end{lemma}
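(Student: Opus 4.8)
The plan is to reduce this to the bit-level computation already performed for Lemma~\ref{X_i-Lemma}, by viewing each $Y_{k, s, i, l}$ as a dyadic block. First I would dispose of the case $l = 0$: here $\lfloor 2^{l - 1} \rfloor = 0$ and $2^l - 1 = 0$, so $Y_{k, s, i, 0} = \{t_{k, s, i}\}$ and $X_{k, s, i, 0} = \{\tmop{rev}_k (t_{k, s, i})\}$; the asserted index range $[[0, \lceil 2^{- 1} \rceil - 1]] = [[0, 0]]$ forces $x' = 0$, so the claimed set is exactly $\{\tmop{rev}_k (t_{k, s, i})\}$, and the bound $\tmop{rev}_k (t_{k, s, i}) < 2^{k + 1}$ is immediate from Lemma~\ref{X_i-Lemma}.

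For $l \geq 1$ the key observation is that $Y_{k, s, i, l} = [[T, T + 2^{l - 1} - 1]]$ with $T = t_{k, s, i} + 2^{l - 1}$, a run of $2^{l - 1}$ consecutive integers. Since $l \leq l_{k, s, i}$, the $l$ lowest bits of $t_{k, s, i}$ vanish, so adding $2^{l - 1}$ causes no carry and merely sets bit $l - 1$. Writing $y_j = \tmop{bit}_j (t_{k, s, i})$, the elements of $Y_{k, s, i, l}$ are then exactly the numbers $(y_{k - 1}, \ldots, y_l, 1, y'_{l - 2}, \ldots, y'_0)_2$ with $y'_j \in \{0, 1\}$ (the digit at position $l - 1$ forced to $1$). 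Reversing as in the proof of Lemma~\ref{X_i-Lemma}, $X_{k, s, i, l}$ is the set of numbers $(x'_{l - 2}, \ldots, x'_0, 1, y_l, \ldots, y_{k - 1})_2$ with $x'_j \in \{0, 1\}$, where the forced $1$ now sits at position $k - l$. The digits $(y_l, \ldots, y_{k - 1})$ together with that $1$ make up $\tmop{rev}_k (T) = \tmop{rev}_k (t_{k, s, i} + \lfloor 2^{l - 1} \rfloor)$, while the free high digits, read as $x' = (x'_{l - 2}, \ldots, x'_0)_2$, range over $[[0, 2^{l - 1} - 1]] = [[0, \lceil 2^{l - 1} \rceil - 1]]$ and contribute the term $2^{k - l + 1} x'$. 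This is precisely the asserted description of the set.

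It then remains to verify $\tmop{rev}_k (T) < 2^{k - l + 1}$. Since adding $2^{l - 1}$ to $t_{k, s, i}$ sets a single bit, which $\tmop{rev}_k$ sends to position $k - l$, we have $\tmop{rev}_k (T) = \tmop{rev}_k (t_{k, s, i}) + 2^{k - l}$; Lemma~\ref{X_i-Lemma} gives $\tmop{rev}_k (t_{k, s, i}) < 2^{k - l_{k, s, i}} \leq 2^{k - l}$, whence $\tmop{rev}_k (T) < 2^{k - l} + 2^{k - l} = 2^{k - l + 1}$. The only delicate step is the bit-position bookkeeping under $\tmop{rev}_k$: one must confirm that the free low digits of the block offset become the high digits of the image weighted exactly by $2^{k - l + 1}$ (so that they realize the arithmetic progression with this common difference), and that the forced bit at position $l - 1$ reverses to position $k - l$, accounting exactly for the shift $\tmop{rev}_k (T) - \tmop{rev}_k (t_{k, s, i})$. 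No idea beyond that of Lemma~\ref{X_i-Lemma} is needed; the novelties are only the shifted exponent $l - 1$ and the floor/ceiling accounting that merges the $l = 0$ case with the case $l \geq 1$.
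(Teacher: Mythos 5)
Your proof is correct and follows essentially the same route as the paper's: both dispose of $l=0$ directly and, for $l\geq 1$, treat $Y_{k,s,i,l}$ as a dyadic block of length $2^{l-1}$ aligned at $t_{k,s,i}+2^{l-1}$ (using $t_{k,s,i}\bmod 2^{l_{k,s,i}}=0$ to rule out a carry), then read off $X_{k,s,i,l}$ from the reversed bit pattern. The only cosmetic difference is that you obtain the bound $\tmop{rev}_k(t_{k,s,i}+2^{l-1})<2^{k-l+1}$ additively via Lemma~\ref{X_i-Lemma}, whereas the paper reads it directly off the $l-1$ leading zeros of the reversed representation.
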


\begin{proof}
  If $l = 0$, then $Y_{k, s, i, l} =\{t_{k, s, i}
  \}$ and, \ $X_{k, s, i, l} =\{\tmop{rev}_k (t_{k, s, i})\}$ and
  $\tmop{rev}_k (t_{k, s, i}) < 2^{k + 1}$.
  
  Consider the case: $l > 0$. $Y_{k, s, i, l} = [[(t_{k, s, i} + 2^{l - 1}), (t_{k,
  s, i} + 2^{l - 1}) + 2^{l - 1} - 1]]$. Since $t_{k, s, i} \tmop{mod}
  2^{l_{k, s, i}} = 0$ and $l - 1 < l_{k, s, i}$, we have $(t_{k, s, i} + 2^{l
  - 1}) \tmop{mod} 2^{l - 1} = 0$ and $Y_{k, s, i, l}$ is the set of all
  numbers $(y_{k - 1}, \ldots, y_{l - 1}, y'_{l - 2}, \ldots, y'_0)_2$ such
  that $y_j = \tmop{bit}_j (t_{k, s, i} + 2^{l - 1})$ and $y'_j \in \{0, 1\}$.
  Thus $\tmop{rev}_k (t_{k, s, i} + 2^{l - 1}) < 2^{k - l + 1}$ and $X_{k, s,
  i, l}$ is the set of all numbers $(x'_{l - 2}, \ldots, x'_0, x_{k - l},
  \ldots, x_0)_2$ such that $x_j = y_{k - 1 - j}$ and $x'_j \in \{0, 1\}$. 
\end{proof}

\begin{lemma}
  \label{union-X-Lemma}$\bigcup_{j = 0}^l X_{k, s, i, j} =\{\tmop{rev}_k
  (t_{k, s, i}) + 2^{k - l} \cdot x' | x' \in [[0, 2^l - 1]]\}$ and
  $\tmop{rev}_k (t_{k, s, i}) < 2^{k - l}$.
\end{lemma}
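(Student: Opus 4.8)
The plan is to prove both claims by induction on $l$, using Lemma~\ref{X_il-Lemma} as the source of the individual sets $X_{k, s, i, j}$. Write $t = t_{k, s, i}$ and $L = l_{k, s, i}$ for brevity, and recall the divisibility $t \tmop{mod} 2^L = 0$, so that bits $0, \ldots, L - 1$ of $t$ all vanish; since we only consider $l \leq L$, this is the property that drives the whole argument.

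For the base case $l = 0$, Lemma~\ref{X_il-Lemma} gives $\bigcup_{j = 0}^0 X_{k, s, i, j} = X_{k, s, i, 0} = \{\tmop{rev}_k (t)\}$ (here $\lfloor 2^{- 1} \rfloor = 0$ and $x'$ ranges over $[[0, 0]]$), matching the right-hand side $\{\tmop{rev}_k (t) + 2^k \cdot 0\}$, and $\tmop{rev}_k (t) < 2^k$ is immediate. For the inductive step, assume the claim for $l - 1$, so that $\bigcup_{j = 0}^{l - 1} X_{k, s, i, j} = \{\tmop{rev}_k (t) + 2^{k - l + 1} x' \mid x' \in [[0, 2^{l - 1} - 1]]\}$. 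By Lemma~\ref{X_il-Lemma} (with $l \geq 1$, so the floor and ceiling both equal $2^{l - 1}$), $X_{k, s, i, l} = \{\tmop{rev}_k (t + 2^{l - 1}) + 2^{k - l + 1} x' \mid x' \in [[0, 2^{l - 1} - 1]]\}$. The key computation is the identity $\tmop{rev}_k (t + 2^{l - 1}) = \tmop{rev}_k (t) + 2^{k - l}$: since bit $l - 1$ of $t$ is zero, adding $2^{l - 1}$ merely sets that bit, and under $\tmop{rev}_k$ the bit in position $l - 1$ moves to position $k - l$.

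With that identity the new block becomes $X_{k, s, i, l} = \{\tmop{rev}_k (t) + 2^{k - l} (2 x' + 1) \mid x' \in [[0, 2^{l - 1} - 1]]\}$, i.e.\ the terms $\tmop{rev}_k (t) + 2^{k - l} m$ with $m$ odd in $[[1, 2^l - 1]]$, while the inductive union supplies exactly the terms with $m$ even in $[[0, 2^l - 2]]$. Merging these two interleaved progressions of step $2^{k - l + 1}$ yields the full progression $\{\tmop{rev}_k (t) + 2^{k - l} m \mid m \in [[0, 2^l - 1]]\}$, the desired right-hand side. The bound $\tmop{rev}_k (t) < 2^{k - l}$ follows because bits $0, \ldots, l - 1$ of $t$ are zero, hence the top $l$ bits of $\tmop{rev}_k (t)$ (positions $k - 1, \ldots, k - l$) are zero.

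I expect the only non-routine point to be the bit-reversal identity $\tmop{rev}_k (t + 2^{l - 1}) = \tmop{rev}_k (t) + 2^{k - l}$ together with the observation that it is precisely the divisibility $t \tmop{mod} 2^L = 0$ (with $l \leq L$) that forces bit $l - 1$ of $t$ to be zero and hence prevents any carry. Once this is established, the merge of the even- and odd-indexed terms into a single step-$2^{k - l}$ progression is a routine counting check. A shorter alternative avoids the induction entirely: observe that $\bigcup_{j = 0}^l Y_{k, s, i, j} = [[t, t + 2^l - 1]]$ is a single dyadic block aligned at a multiple of $2^l$, and then re-run verbatim the bit-manipulation argument of Lemma~\ref{X_i-Lemma} with $l$ in place of $l_{k, s, i}$.
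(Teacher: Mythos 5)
Your proof is correct, but your main line of argument is genuinely different from the paper's. The paper proves this lemma in one line on the $Y$-side: it observes that $\bigcup_{j=0}^{l} Y_{k,s,i,j} = [[t_{k,s,i},\, t_{k,s,i}+2^l-1]]$ is a single contiguous block aligned at a multiple of $2^l$ (since $t_{k,s,i} \tmop{mod} 2^l = 0$), and then reuses verbatim the bit-pattern argument of Lemma~\ref{X_i-Lemma} with $l$ in place of $l_{k,s,i}$ --- exactly the ``shorter alternative'' you sketch in your closing sentence. Your primary route instead works on the $X$-side, inducting on $l$ via Lemma~\ref{X_il-Lemma}, with the carry-free identity $\tmop{rev}_k(t+2^{l-1}) = \tmop{rev}_k(t) + 2^{k-l}$ (valid because bit $l-1$ of $t$ vanishes) and a merge of the odd-indexed progression contributed by $X_{k,s,i,l}$ with the even-indexed progression from the inductive hypothesis. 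All steps check out, including the bound $\tmop{rev}_k(t_{k,s,i}) < 2^{k-l}$. What your induction buys is an explicit picture of how each new level $X_{k,s,i,l}$ interleaves exactly halfway between consecutive elements already present --- which is precisely the binary-search-tree structure the paper later formalizes in Lemma~\ref{X-BST} --- at the cost of more bookkeeping than the paper's one-line reduction.
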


\begin{proof}
  $\bigcup_{j = 0}^l X_{k, s, i, j} =
  \tmop{rev}_k ( \bigcup_{j = 0}^l Y_{k, s, i, j}) = \tmop{rev}_k ([[t_{k, s,
  i}, t_{k, s, i} + 2^l - 1]])$. Since $t_{k, s, i} \tmop{mod} 2^l = 0$, the
  proof follows as in the previous lemmas. 
\end{proof}

\section{\label{Section-analysis}\tmtextbf{The analysis of the receiver's
process for $[\kappa', \kappa'']$}}

Let $r'$ and $r''$ be defined as follows:
\begin{itemizedot}
  \item $r' = \min \{r \in [[- 1, n]] | \kappa' \leq \kappa_r \}$, and
  
  \item $r'' = \max \{r \in [[- 1, n]] | \kappa_r \leq \kappa'' \}$.
\end{itemizedot}
For each $r \in [[r', r'']]$, \ $\kappa_r \in [\kappa', \kappa'']$. If $[\kappa',
\kappa''] \cap \tmop{KEYS} = \emptyset$ then, for some $r \in [[- 1, n - 1]]$,
$\kappa_r < \kappa'$ and $\kappa'' < \kappa_{r + 1}$, and $r' = r + 1$ and
$r'' = r$. \ If $[\kappa', \kappa''] \cap \tmop{KEYS} \not=  \emptyset$ then,
since $- \infty < \kappa' \leq \kappa'' < + \infty$, we have \ $0 \leq r' \leq
r'' \leq n - 1$.

Let $s$ be the first time slot of the receiver's protocol. We assume w.l.o.g.
that $s \in [[0, n - 1]]$. For $t \geq 0$: Let $\tmop{lb}_t$ and $\tmop{ub}_t$
be the values of the variables $\tmop{lb}$ and $\tmop{ub}$, respectively, at
receiver time $t$. (Thus $\tmop{lb}_0 = 0$ and $\tmop{ub}_0 = n - 1$.) Let
$x_t = \tmop{rev}_k ((s + t) \tmop{mod} n)$. Let $\tmop{used}_t = 1$ if
$\tmop{lb}_t \leq x_t \leq \tmop{ub}_t$ and $\tmop{used}_t = 0$ otherwise.
($\tmop{used}_t = 1$ if the receiver wakes-up the radio at receiver time $t$.)
\ Let $\tmop{hit}_t = 1_{}$ if $r' \leq x_t \leq r''$ and $\tmop{hit}_t = 0$
otherwise. ($\tmop{hit}_t = 1$ if the requested frame is received at receiver
time $t$.) The {\tmem{energy}} used in the initial $t$ time slots is
$\tmop{en} (t) = \sum_{j = 1}^t \tmop{used}_j$. The {\tmem{extra energy}} is
the energy used for the reception of messages with the keys outside $[\kappa',
\kappa'']$: $\tmop{ee} (t) = \tmop{en} (t) - \sum_{j = 1}^t \tmop{hit}_j$. Let
\ $\tmop{HY}_t =\{(s + y) \tmop{mod} n | y \in [[0, t - 1]]\}$. $\tmop{HY}_t$
\ is the set of the broadcaster time-slot numbers modulo $n$ of the receiver's
initial $t$ slots. Let $\tmop{HX}_t = \tmop{rev}_k \tmop{HY}_t$. Note that
$\tmop{HX}_0 = \tmop{HY}_0 = \emptyset$. A {\tmem{history of the lower
(respectively, upper) bounds up to time $t$ for $[\kappa', \kappa'']$}} is the
sequence \ $\tmop{HL} (\kappa', \kappa'', t) = (\tmop{lb}_0, \ldots,
\tmop{lb}_t)$ (respectively, $\tmop{HU} (\kappa', \kappa'', t) = (\tmop{ub}_0,
\ldots, \tmop{ub}_t)$).

Let $\tau = \min \{t \geq 0 | \tmop{hit}_t = 1 \vee \tmop{lb}_{t + 1} >
\tmop{ub}_{t + 1} \}$. Note that $\tau$ is the time until the first hit or
noticing that $[\kappa', \kappa''] \cap \tmop{KEYS} = \emptyset$. By the
{\tmem{first receiver cycle}} we mean the first $n$ slots of the receiver
time. \ For $y \in [[0, n]]$, let $\tmop{tt} (y)$ denote the receiver time
just before the transmission of the broadcast time slot $y \tmop{mod} n$, in
the first receiver cycle, i.e. $\tmop{tt} (y) = \min \{t \geq 0 | (s + t)
\tmop{mod} n = y\}$. Note that, since $\tmop{HX}_n = [[0, n - 1]]$, we have
$\tau < n$. \

\begin{theorem}
  \label{first-hit-Theorem}We have $\tau < n$ and $\tmop{en} (\tau) \leq 2
  \cdot k + 1$. \ 
\end{theorem}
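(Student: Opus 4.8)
The plan is to charge the used slots to the segments $X_{k,s,i}$ and, inside each segment, to the sub-segments $X_{k,s,i,l}$; throughout I abbreviate $l_i=l_{k,s,i}$ and $t_i=t_{k,s,i}$. First I would record that the receiver is running a binary-search-style narrowing of $[[\tmop{lb},\tmop{ub}]]$ toward $[[r',r'']]$: since the keys are sorted, a received index $x$ has $\kappa_x<\kappa'$ exactly when $x<r'$ (raising $\tmop{lb}$ to $x+1$) and $\kappa_x>\kappa''$ exactly when $x>r''$ (lowering $\tmop{ub}$ to $x-1$), so $[[r',r'']]\subseteq[[\tmop{lb}_t,\tmop{ub}_t]]$ is maintained until time $\tau$. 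As $\tau<n$ is already established (every index appears in the first cycle), the process lives inside the first receiver cycle, and it suffices to bound the total used slots accumulated over $X_{k,s,0},\ldots,X_{k,s,\tmop{last}_{k,s}}$ up to the first hit or the first moment $\tmop{lb}>\tmop{ub}$.

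The key step is an interval-shrinking invariant. Set $U_{i,l}=\bigcup_{j=0}^{l}X_{k,s,i,j}$; by Lemma~\ref{union-X-Lemma} this is an arithmetic progression of step $2^{k-l}$ whose offset is $<2^{k-l}$, so it meets every window of $2^{k-l}$ consecutive indices. I would show that once all of $U_{i,l}$ has been broadcast and no hit has occurred, $[[\tmop{lb},\tmop{ub}]]$ contains no point of $U_{i,l}$, hence has length $<2^{k-l}$. Indeed, a surviving $g\in U_{i,l}$ would (since the interval only shrinks) have been inside the interval when broadcast, hence received; but then $g<r'$ forces $\tmop{lb}>g$, $g>r''$ forces $\tmop{ub}<g$, and $r'\le g\le r''$ is a hit, each contradicting $g\in[[\tmop{lb},\tmop{ub}]]$ with no hit.

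Next I would bound the used slots $w_i$ inside segment $i$. At its start, the invariant applied to segment $i-1$ (whose index set is $U_{i-1,l_{i-1}}=X_{k,s,i-1}$, of step $2^{k-l_{i-1}}$) gives interval length $<2^{k-l_{i-1}}$. I would then split the sub-segments of segment $i$, using that the $l_i$ strictly increase. All points of sub-segments $0,\ldots,l_{i-1}$ lie in the step-$2^{k-l_{i-1}}$ progression $U_{i,l_{i-1}}$, which the starting interval (length $<2^{k-l_{i-1}}$) meets in at most one point; since the interval only shrinks, these sub-segments together cost at most one wake-up. Each later sub-segment $l$ with $l_{i-1}<l\le l_i$ is, by Lemma~\ref{X_il-Lemma}, a step-$2^{k-l+1}$ progression, while the invariant after sub-segment $l-1$ bounds the interval length by $2^{k-l+1}$, so at most one of its points is received. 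This yields $w_i\le 1+(l_i-l_{i-1})$ for $i\ge1$, and the same argument with the trivial starting bound $2^k$ gives $w_0\le l_0+1$.

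Summing and telescoping, $\tmop{en}(\tau)\le\sum_i w_i\le (l_0+1)+\sum_{i\ge1}\bigl(1+(l_i-l_{i-1})\bigr)=k+1+\tmop{last}_{k,s}$, using $l_{k,s,\tmop{last}_{k,s}}=k$; and since $l_0<l_1<\cdots<l_{\tmop{last}_{k,s}}=k$ are distinct integers in $[[0,k]]$ we get $\tmop{last}_{k,s}\le k$, whence $\tmop{en}(\tau)\le 2k+1$. When the first hit falls in some sub-segment, that hit is the single wake-up charged to it and the partial segment is bounded exactly as above, so termination does not break the count. The hard part will be precisely this cross-segment bookkeeping: because $X_{k,s,i}$ carries a different offset $\tmop{rev}_k(t_i)$ in each segment, the coarse sub-segments of a new segment are not nested in the previous grid and can each cost a wake-up; the crux is to show they cost only one extra wake-up per segment, so that the refinement costs telescope to $k$ while the $\tmop{last}_{k,s}+1\le k+1$ ``seam'' costs account for the rest, landing exactly on $2k+1$.
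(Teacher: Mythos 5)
Your proof is correct and follows essentially the same route as the paper: the same decomposition into the segments $Y_{k,s,i}$ and sub-segments $Y_{k,s,i,l}$, the same density/sparsity argument via the arithmetic-progression structure of Lemmas~\ref{X_i-Lemma}--\ref{union-X-Lemma}, and the same telescoping sum giving $l_{k,s,\tmop{last}_{k,s}}+\tmop{last}_{k,s}+1\le 2k+1$. The only small divergence is that you establish the exclusion invariant $[[\tmop{lb}_t,\tmop{ub}_t]]\cap\tmop{HX}_t=\emptyset$ directly for all $t\le\tau$ (every non-hit reception expels its index from the shrinking interval), whereas the paper proves it only under $[\kappa',\kappa'']\cap\tmop{KEYS}=\emptyset$ (Lemma~\ref{clipped-Lemma}) and then handles the general case by perturbing to a key $\gamma\notin\tmop{KEYS}$ with identical bound histories; your version is a mild streamlining of the same idea.
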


\begin{proof}
  We prove Lemmas~\ref{clipped-Lemma}, \ref{first-tree-Lemma},
  \ref{next-tree-Lemma}, \ref{total-Lemma}, to show the theorem in the case
  $[\kappa', \kappa''] \cap \tmop{KEYS} = \emptyset$, and then conclude \ the
  general case.
  
  \begin{lemma}
    \label{clipped-Lemma} If $[\kappa', \kappa''] \cap \tmop{KEYS} =
    \emptyset$, then, for $t \geq 0$, \ $[[\tmop{lb}_t, \tmop{ub}_t]]
    \subseteq [[0, n - 1]] \setminus \tmop{HX}_t $.
  \end{lemma}
  
  \begin{proof}
    Note that the Lemma follows directly from
    the algorithm: $\tmop{lb}_0 = 0$, $\tmop{ub}_0 = n - 1$, and, since
    $[\kappa', \kappa''] \cap \tmop{KEYS} = \emptyset$, for each $x \in
    \tmop{HX}_t$, either $x < \tmop{lb}_t$ or $\tmop{ub}_t < x$. 
  \end{proof}
  
  Since $k$ and $s$ are fixed, we use the following notation: $\tmop{last} =
  \tmop{last}_{k, s}$, $t_i = t_{k, s, i}$, $k_i = l_{k, s, i}$, $Y_i = Y_{k,
  s, i}$, $X_i = X_{k, s, i}$, $Y_{i, j} = Y_{k, s, i, j}$, and $X_{i, j} =
  X_{k, s, i, j}$.
  
  \begin{lemma}
    \label{first-tree-Lemma} If $[\kappa', \kappa''] \cap \tmop{KEYS} =
    \emptyset$, then $\sum_{(s + t) \tmop{mod} n \in Y_0} \tmop{used}_t \leq
    k_0 + 1$.
  \end{lemma}
  
  \begin{proof}
    Since $Y_{0, 0} =\{t_0 \}$, and only the
    first time slot congruent modulo $n$ to $t_0$ is used, we have $\sum_{(s +
    t) \tmop{mod} n \in Y_{0, 0}} \tmop{used}_t = 1$.
    
    For $0 < l \leq k_0$, we show that \ $\sum_{(s + t) \tmop{mod} n \in Y_{0,
    l}} \tmop{used}_t \leq 1$: By Lemma~\ref{clipped-Lemma},
    $[[\tmop{lb}_{2^{l - 1}}, \tmop{ub}_{2^{l - 1}}]] \subseteq [[0, n - 1]]
    \setminus \tmop{HX}_{2^{l - 1}}$, and $\tmop{HX}_{2^{l - 1}} = \bigcup_{j
    = 0}^{l - 1} X_{0, j}$, which, by Lemma~\ref{union-X-Lemma}, contains
    {\tmem{all}} the integers from $[[0, n - 1]]$ \ congruent modulo $2^{k -
    (l - 1)}$ to $\tmop{rev} (t_0)$. Hence $\tmop{ub}_{2^{l - 1}} + 1 \leq
    \tmop{lb}_{2^{l - 1}} - 1 + 2^{k - (l - 1)}$. By Lemma~\ref{X_il-Lemma},
    $X_{0, l}$ contains {\tmem{only}} the integers from $[[0, n - 1]]$ \
    congruent modulo $2^{k - (l - 1)}$ to $\tmop{rev}_k (t_0 + 2^{l - 1})$.
    Hence, $|X_{0, l} \cap [[\tmop{lb}_{2^{l - 1}}, \tmop{ub}_{2^{l - 1}}]] |
    \leq 1$, and, since only the first time slot congruent modulo $n$ is used,
    we have \ $\sum_{(s + t) \tmop{mod} n \in Y_{0, l}} \tmop{used}_t \leq 1$.
    Since $Y_0 = \bigcup_{l = 0}^{k_0} Y_{0, l}$, the Lemma follows.
  \end{proof}
  
  \begin{lemma}
    \label{next-tree-Lemma} If $[\kappa', \kappa''] \cap \tmop{KEYS} =
    \emptyset$, then, for $1 \leq i \leq \tmop{last}$, $\sum_{(s + t)
    \tmop{mod} n \in Y_i} \tmop{used}_t \leq k_i - k_{i - 1} + 1$.
  \end{lemma}
  
  \begin{proof}
    Let $t'' = \tmop{tt} (t_i)$. By
    Lemma~\ref{clipped-Lemma}, $[[\tmop{lb}_{t''}, \tmop{ub}_{t''}]] \subseteq
    [[0, n - 1]] \setminus \tmop{HX}_{t''}$. We have $X_{i - 1} \subseteq
    \tmop{HX}_{t''}$, and, by Lemma~\ref{X_i-Lemma}, $X_{i - 1}$ contains
    {\tmem{all}} the integers from $[[0, n - 1]]$ \ congruent modulo $2^{k -
    k_{i - 1}}$ to $\tmop{rev} (t_{i - 1})$. Thus, $\tmop{ub}_{t''} + 1 \leq
    \tmop{lb}_{t''} - 1 + 2^{k - k_{i - 1}}$. By Lemma~\ref{union-X-Lemma},
    $\bigcup_{j = 0}^{k_{i - 1}} X_{i, j}$ contains {\tmem{only}} the integers
    from $[[0, n - 1]]$ \ congruent modulo $2^{k - k_{i - 1}}$ to
    $\tmop{rev}_k (t_i)$. Hence, we have $| \bigcup_{j = 0}^{k_{i - 1}} X_{i,
    j} \cap [\tmop{lb}_{t''}, \tmop{ub}_{t''}] | \leq 1$, and $\sum_{(t' + t)
    \tmop{mod} n \in \bigcup_{0 \leq j \leq k_{i - 1}} Y_{i, j}} \tmop{used}_t
    \leq 1$.
    
    For $k_{i - 1} < l \leq k_i$, we show that $\sum_{(s + t) \tmop{mod} n
    \in Y_{i, l}} \tmop{used}_t \leq 1$: We have $[[\tmop{lb}_{t'' + 2^{l -
    1}}, \tmop{ub}_{t'' + 2^{l - 1}}]] \subseteq [[0, n - 1]] \setminus
    \tmop{HX}_{t'' + 2^{l - 1}}$ and $\tmop{HX}_{t'' + 2^{l - 1}}$ is a
    super-set of $\bigcup_{j = 0}^{l - 1} X_{i, j}$, which, by
    Lemma~\ref{union-X-Lemma}, contains {\tmem{all}} the integers from $[[0, n
    - 1]]$ \ congruent modulo $2^{k - (l - 1)}$ to $\tmop{rev} (t_i)$. By
    Lemma~\ref{X_il-Lemma}, $X_{i, l}$ contains {\tmem{only}} the integers
    from $[[0, n - 1]]$ \ congruent modulo $2^{k - (l - 1)}$ to $\tmop{rev}_k
    (t_i + \lfloor 2^{l - 1} \rfloor)$. Hence $|X_{i, l} \cap [[\tmop{lb}_{t''
    + 2^{l - 1}}, \tmop{ub}_{t'' + 2^{l - 1}}]] | \leq 1$. 

Thus $\sum_{(s + t)
    \tmop{mod} n \in \bigcup_{k_{i - 1} < j \leq k_i} Y_{i, j}} \tmop{used}_t
    \leq k_i - k_{i - 1}$ and the lemma follows.
  \end{proof}
  
  \begin{lemma}
    \label{total-Lemma} If $[\kappa', \kappa''] \cap \tmop{KEYS} = \emptyset$,
    then $\sum_{t > 0} \tmop{used}_t \leq 2 k + 1$.
  \end{lemma}
  
  \begin{proof}
   $[[0, n - 1]] = \bigcup_{i = 0}^{\tmop{last}}
    Y_i$, and 
$\sum_{t > 0} \tmop{used}_t = $
$\sum_{(s + t) \tmop{mod} n \in
    Y_0} \tmop{used}_t + \sum_{i = 1}^{\tmop{last}} ( \sum_{(s + t) \tmop{mod}
    n \in Y_i} \tmop{used}_t) \text{}$. Thus, by Lemma~\ref{first-tree-Lemma}
    and Lemma~\ref{next-tree-Lemma}, $\sum_{t > 0} \tmop{used}_t \leq k_0 + 1
    + \sum_{i = 1}^{\tmop{last}} (k_i - k_{i - 1} + 1) = k_{\tmop{last}} +
    \tmop{last} + 1$. Since $k_0, \ldots, k_{\tmop{last}}$ is increasing
    sequence of values from $[[0, k]]$, we have $k_{\tmop{last}} \leq k$ and
    $\tmop{last} \leq k$.
  \end{proof}
  
  In Lemma~\ref{total-Lemma} we assumed that $[\kappa', \kappa''] \cap
  \tmop{KEYS} = \emptyset$. Note that we have:
  \begin{itemize}
    \item $0 \leq \tmop{lb}_{\tau} \leq \tmop{ub}_{\tau} \leq n - 1$,
    and
    
    \item $\kappa_{\tmop{lb}_{\tau } - 1} < \kappa' \leq \kappa'' <
    \kappa_{\tmop{ub}_{\tau } + 1}$ \ (since $\tmop{hit}_t = 0$, for $0 < t
    \leq \tau - 1$), \ and
    
    \item $\tmop{HX}_{\tau} \cap [\tmop{lb}_{\tau}, \tmop{ub}_{\tau}] = \emptyset$.
  \end{itemize}
  Since $[\kappa_{\tmop{lb}_{\tau} - 1}, \kappa'] \cap \tmop{KEYS}$ is
  finite, we can choose real number $\gamma$ such that
  $\kappa_{\tmop{lb}_{\tau} - 1} < \gamma < \kappa'$ and $\gamma \not\in
  \tmop{KEYS}$. Since $\kappa_{\tmop{lb}_{\tau} - 1} < \gamma <
  \kappa_{\tmop{ub}_{\tau} + 1}$ , the respective histories of the bounds
  up to the time $\tau$ for $[\kappa', \kappa'']$ and $[\gamma, \gamma]$
  are identical:
  \begin{itemize}
    \item $\tmop{HL} (\kappa', \kappa'', \tau) = \tmop{HL} (\gamma,
    \gamma, \tau)$, and
    
    \item $\tmop{UL} (\kappa', \kappa'', \tau) = \tmop{UL} (\gamma,
    \gamma, \tau)$.
  \end{itemize}
  Note that, since $\tmop{lb}_{\tau} \leq \tmop{ub}_{\tau}$, the
  energy needed to notice that $[\gamma, \gamma]\cap \tmop{KEYS} = \emptyset$
  is at least $\tmop{en} (\tau)$.
  Therefore, by Lemma~\ref{total-Lemma}, $\tmop{en} (\tau) \leq 2 k + 1$. We
  conclude that $\tmop{en} (\tau) \leq 2 k + 1$ also when $[\kappa', \kappa'']
  \cap \tmop{KEYS} \not=  \emptyset$.
\end{proof}

\begin{corollary}
  \label{Corollary-ee}For arbitrary $t > 0$, $\tmop{ee} (t) \leq 4 k + 2$.
\end{corollary}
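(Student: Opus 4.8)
The plan is to split the extra energy into lower-bound and upper-bound contributions and to bound each of them by Lemma~\ref{total-Lemma} applied to a single empty search point. Every reception counted by $\tmop{ee}(t)$ carries a key outside $[\kappa', \kappa'']$, so it is either a \emph{lower clip} (key $< \kappa'$, raising $\tmop{lb}$ to $x_t + 1$) or an \emph{upper clip} (key $> \kappa''$, lowering $\tmop{ub}$ to $x_t - 1$); the hits contribute nothing. Hence $\tmop{ee}(t)$ is the number of lower clips plus the number of upper clips among the first $t$ slots, and it suffices to prove that over the whole run there are at most $2k+1$ lower clips and at most $2k+1$ upper clips. If $[\kappa', \kappa''] \cap \tmop{KEYS} = \emptyset$ this is immediate, since then there are no hits and Lemma~\ref{total-Lemma} gives $\tmop{ee}(t) \le \tmop{en}(t) \le 2k+1$; so I assume $0 \le r' \le r'' \le n-1$.

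The key step is to show that the evolution of $\tmop{lb}$ does not depend on $\tmop{ub}$ at all. First I record the invariant $\tmop{lb}_t \le r' \le r'' \le \tmop{ub}_t$: it holds at $t = 0$, a lower clip keeps $\tmop{lb} \le r'$ (as $x_t < r'$), an upper clip keeps $\tmop{ub} \ge r''$ (as $x_t > r''$), and hits change neither bound. Given the invariant, a lower clip happens at slot $t$ \emph{iff} $\tmop{lb}_t \le x_t \le r'-1$, because the test $x_t \le \tmop{ub}_t$ in the reception condition is automatically satisfied ($x_t \le r'-1 < r' \le \tmop{ub}_t$). Thus the trajectory of $\tmop{lb}$ is governed solely by the rule ``clip when $\tmop{lb}_t \le x_t \le r'-1$'', regardless of $\tmop{ub}$.

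Now choose $\gamma' \notin \tmop{KEYS}$ with $\kappa_{r'-1} < \gamma' < \kappa'$ (possible since $\kappa_{r'-1} < \kappa'$ and only finitely many keys lie between them), and run the protocol for the empty interval $[\gamma', \gamma']$ from the same start slot $s$, writing $\tmop{lb}'_t$ for its lower bound. Its lower-clip boundary is again $r'$ (indices below $r'$ carry keys $\le \kappa_{r'-1} < \gamma'$, indices at least $r'$ carry keys $\ge \kappa_{r'} > \gamma'$), and the same invariant argument shows its lower clips occur exactly when $\tmop{lb}'_t \le x_t \le r'-1$. Since both searches start from $\tmop{lb} = 0$ with identical update rules, induction on $t$ gives $\tmop{lb}_t = \tmop{lb}'_t$ for all $t$, so the two searches perform their lower clips in exactly the same slots. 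By Lemma~\ref{total-Lemma} the $[\gamma', \gamma']$ search spends total energy at most $2k+1$, hence the $[\kappa', \kappa'']$ search performs at most $2k+1$ lower clips. A mirror-image argument, with $\gamma'' \notin \tmop{KEYS}$ satisfying $\kappa'' < \gamma'' < \kappa_{r''+1}$, bounds the upper clips by $2k+1$, and summing the two yields $\tmop{ee}(t) \le 4k+2$.

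The one nonroutine point is the decoupling in the second paragraph: that $\tmop{lb}$ is completely oblivious to $\tmop{ub}$. It hinges on the invariant $\tmop{ub}_t \ge r'$, which guarantees that a sub-$\kappa'$ key is never skipped merely because its index has risen above the current upper bound; once this is secured, identifying the two lower-bound trajectories and invoking Lemma~\ref{total-Lemma} is routine.
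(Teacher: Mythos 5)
Your proposal is correct and follows essentially the same route as the paper: it introduces the auxiliary empty searches $[\gamma',\gamma']$ and $[\gamma'',\gamma'']$, identifies the lower-bound (resp.\ upper-bound) history of the original search with that of the corresponding empty search, and bounds each by $2k+1$ via the empty-intersection case. The only difference is that you spell out, via the invariant $\tmop{lb}_t \le r' \le r'' \le \tmop{ub}_t$, why the two lower-bound trajectories coincide --- a step the paper asserts without detail.
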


\begin{proof}
 If $[\kappa', \kappa''] \cap \tmop{KEYS} =
  \emptyset$, then $\tmop{ee} (t) \leq \tmop{en} (\tau)$ and, by
  Theorem~\ref{first-hit-Theorem}, $\tmop{en} (\tau) \leq 2 k + 1$.
  
  Consider the case $[\kappa', \kappa''] \cap \tmop{KEYS} \not=  \emptyset$.
  Then $- 1 < r' \leq r'' < n$. Let $\gamma'$ and $\gamma''$ be such that
  $\kappa_{r' - 1} < \gamma' < \kappa_{r'}$ and $\kappa_{r''} < \gamma'' <
  \kappa_{r'' + 1}$. Then, for arbitrary $t \geq 0$, $\tmop{HL} (\gamma',
  \gamma', t) = \tmop{HL} (\kappa', \kappa'', t)$ and $\tmop{HU} (\gamma'',
  \gamma'', t) = \tmop{HU} (\kappa', \kappa'', t)$. Any reception of the key
  that is outside $[\kappa', \kappa'']$ updates either the lower or the upper
  bound: For $t > 0$, $\tmop{used}_t = 1$ and $\tmop{hit}_t = 0$ if and only
  if either $\tmop{lb}_{t - 1} < \tmop{lb}_t$ or $\tmop{ub}_t < \tmop{ub}_{t -
  1}$. Thus $\tmop{ee} (t)$ is equal to the total number of changes in both
  $\tmop{HL} (\kappa', \kappa'', t)$ and $\tmop{HU} (\kappa', \kappa'', t)$.
  Since $[\gamma', \gamma'] \cap \tmop{KEYS} = \emptyset$, the number of
  changes in $\tmop{HL} (\gamma', \gamma', t)$ is not greater than $2 k + 1$.
  Similarly, the number of changes in $\tmop{HU} (\gamma'', \gamma'', t)$ is
  not greater than $2 k + 1$.
\end{proof}

\subsection{\label{Section-unreliable}Unreliable network}

Consider a model of the network, where the probability of successful reception
is $p$, $0 \leq p \leq 1$. Thus the receiver may wake up to listen in some
time slot, and still fail to receive the frame with probability $q = 1 - p$.
Thus the unit of energy used for the wake-up is lost. We state that in the
case of reception \ failure, the receiver's protocol leaves its variables
$\tmop{lb}$ and $\tmop{ub}$ unchanged and waits for the next time slot from
$\tmop{rev}_k [[\tmop{lb}, \tmop{ub}]]$.

We split the wake-ups of the receiver into {\tmem{hits}} -- the wake-ups in
the time slots from $\tmop{rev}_k [[r', r'']]$, \ and {\tmem{misses}} -- the
remaining wake-ups. The hits are unavoidable: the requested keys are
transmitted during the hits. The penalty for unreliability here is that the
reception rate drops from $1$ to $p$ -- which is the highest possible in this
model. Another penalty is the increase in the number of the misses. We
show the bound on the number of the misses in unreliable network. \ Recall
that the first wake up of the protocol is in time slot $s$. For $t \geq 0$,
let $\tmop{success} (t)$ be true if the transmission in the $t$th receiver's
time slot is successful, and false -- otherwise.

\begin{lemma}
  \label{Lemma-expected-trailer}The expected number of misses after the first
  receiver cycle (i.e. after the initial $n$ time slots) is not greater than
  $2 \cdot q / p^2$.
\end{lemma}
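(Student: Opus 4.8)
The plan is to treat the lower and upper boundaries separately. A wake-up at a position $x < r'$ is a miss, and whether it happens depends only on whether $\tmop{lb} \le x$: since $x < r' \le r'' \le \tmop{ub}$ always holds, the condition $\tmop{lb} \le x \le \tmop{ub}$ reduces to $\tmop{lb} \le x$, so the evolution of the lower misses is governed entirely by $\tmop{lb}$ and is independent of the upper side (and symmetrically). By the reflection symmetry of $\tmop{rev}_k$ it therefore suffices to bound the expected number of lower misses after the first cycle by $q/p^2$ and double. First I would record the monotone structure of the lower search: $\tmop{lb} \le r'$ at all times; $\tmop{lb}$ is reset to $x+1$ exactly when a position $x < r'$ is received successfully; and in a cycle where no position of a given block succeeds, $\tmop{lb}$ cannot be pushed above the block. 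A lower miss after the first cycle is thus a wake-up at some position $x \in [[\tmop{lb}, r' - 1]]$.

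Next I would apply linearity of expectation over positions. For $d \ge 1$ write $x = r' - d$ (present only when $d \le r'$). The expected number of lower misses after the first receiver cycle equals $\sum_{d \ge 1} \sum_{c \ge 2} \Pr[\, r'-d \text{ is probed in receiver cycle } c\,]$. The key structural claim is the bound $\Pr[\, r'-d \text{ probed in cycle } c\,] \le q^{d(c-1)}$. Indeed, probing $r'-d$ in cycle $c$ forces $\tmop{lb} \le r'-d$ at that instant, hence none of the $d$ positions $r'-d, \ldots, r'-1$ has yet been received successfully. In every earlier cycle in which none of these $d$ positions succeeds, the monotonicity above keeps $\tmop{lb} \le r'-d$ throughout that cycle, so all $d$ of them are awake at their slots and all $d$ fail; by per-slot independence of reception this event has probability exactly $q^d$. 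Multiplying over the $c-1$ preceding cycles (whose receptions are independent) gives $q^{d(c-1)}$.

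Summing the geometric series in $c$ yields $\sum_{c \ge 2} q^{d(c-1)} = q^d/(1 - q^d)$, so the lower side contributes at most $\sum_{d \ge 1} q^d/(1 - q^d)$. Expanding $q^d/(1-q^d) = \sum_{m \ge 1} q^{dm}$ and regrouping by $N = dm$ turns this into $\sum_{N \ge 1} \tau(N)\, q^N$, where $\tau(N)$ denotes the number of divisors of $N$; since $\tau(N) \le N$ this is at most $\sum_{N \ge 1} N q^N = q/(1-q)^2 = q/p^2$. The upper side is identical by symmetry, and adding the two bounds gives $2q/p^2$.

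I expect the main obstacle to be the structural inequality $\Pr[\, r'-d \text{ probed in cycle } c\,] \le q^{d(c-1)}$, which is where all the dynamics are hidden. The delicate point is to argue cleanly that, whenever none of the block $r'-d, \ldots, r'-1$ has succeeded so far, \emph{all} of these positions are genuinely awake and independently attempted in each cycle, so that the ``no progress on this block'' events in distinct cycles are independent and each has probability precisely $q^d$. This requires combining the monotonicity of $\tmop{lb}$, the wake-up rule $\tmop{lb} \le x \le \tmop{ub}$, and the slot-wise independence of reception; once it is established, the remaining steps are the routine geometric summation and the divisor-sum comparison above.
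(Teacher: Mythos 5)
Your proof is correct and is essentially the paper's argument: both rest on the same key bound --- position $r'-d$ (resp.\ $r''+d$) is still probed in cycle $c$ only if the $d(c-1)$ receptions of the intervening positions in the preceding cycles were all attempted and failed, an event of probability at most $q^{d(c-1)}$ --- and both then evaluate the same double sum $\sum_{d,j\ge 1} q^{dj}$ to get $q/(1-q)^2 = q/p^2$ per side. The only difference is the order of summation and the final elementary estimate: the paper sums over $d$ first, bounding $E[r'-\tmop{lb}_{jn}]$ by the geometric tail $1/(1-q^j)-1$ for each cycle $j$ and then using $1-q^j\ge 1-q$, whereas you sum over cycles first and close with the divisor-count bound; the two routes are interchangeable.
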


\begin{proof}
  The misses in the cycle following the first cycle are the wake-ups during
  the time slots in $\tmop{rev}_k ([[\tmop{lb}_n, r' - 1]] \cup [[r'' + 1,
  \tmop{ub}_n]])$. The values of $\tmop{lb}_n - 1$ and $\tmop{ub}_n + 1$ are
  the following \ random variables:
  \begin{itemize}
    \item $\tmop{lb}_n - 1 = \max \{- 1\} \cup \{i \in [[0, r' - 1]] \;|\;
    \tmop{success} (\tmop{tt} (\tmop{rev}_k (i))\}$
    
    \item $\tmop{ub}_n + 1 = \min \{n\} \cup \{i \in [[r'' + 1, n - 1]] \;|\;
    \tmop{success} (\tmop{tt} (\tmop{rev}_k (i))\}$
  \end{itemize}
  Each of $r' - (\tmop{lb}_n - 1)$ and $(\tmop{ub}_n + 1) - r''$ can be bound
  by a random variable with geometric distribution (see e.g.
  {\cite{CormenLR89}}) and expected value $1 / p$. Hence, \ $\max \{E [r' -
  \tmop{lb}_n], E [\tmop{ub}_n - r'']\} \leq 1 / p - 1 = 1 / (1 - q) - 1$.
  
  After the $j$th cycle, for $j \geq 1$, each position has been tested $j$
  times. Thus $\max \{E [r' - \tmop{lb}_{j \cdot n}], E [\tmop{ub}_{j \cdot n}
  - r'']\} \leq 1 / (1 - q^j) - 1$ and the expected number of misses in the
  $(j + 1)$st cycle is not greater than $2 (1 / (1 - q^j) - 1)$. Finally, note
  that $\sum_{j = 1}^{\infty} (1 / (1 - q^j) - 1) = \sum_{j = 1}^{\infty} (q^j
  / (1 - q^j)) \leq \frac{1}{1 - q}  \sum_{j = 1}^{\infty} q^j = q / (1 - q)^2
  $.
\end{proof}

The more complex task is to bound the number of misses during \ the first
cycle.

\begin{lemma}
  \label{Lemma-expected-init}If $\tmop{KEYS} \cap [\kappa', \kappa''] =
  \emptyset$, then the expected number of wake-ups (all of them are misses)
  during the first cycle is not greater than $(4 k + 2) / p$.
\end{lemma}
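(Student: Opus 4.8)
In the empty case there is a threshold $r=r''$ (with $r'=r+1$) such that every index $z\le r$ carries a key below $\kappa'$ and every index $z>r$ a key above $\kappa''$; hence every wake-up is a miss, and every woken position belongs to the lower group ($z\le r$) or the upper group ($z>r$) according to its index alone. The plan is to bound the expected number of wake-ups in each group by $(2k+1)/p$ and add. I attach to every broadcast position an independent coin that is ``success'' with probability $p$ and ``failure'' with probability $q=1-p$; within the first cycle each position is offered at most once, so a reception at a woken position succeeds iff its coin does, and the event ``$z$ is woken up'' depends only on the coins of positions broadcast earlier.

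First I would describe a lower-group wake-up combinatorially. By the update rule $\tmop{lb}-1$ is always the largest index $\le r$ received so far, so $z\le r$ is woken up exactly when no strictly larger index $\le r$ has yet been received; moreover upper-group receptions only lower $\tmop{ub}$, which never drops below $r$, so they cannot affect this. I then show this is equivalent to the statement that no index $w$ with $z<w\le r$ broadcast before $z$ has a successful coin. The only delicate direction takes the largest such $w$ whose coin succeeds and argues it must itself have been woken up -- otherwise an even larger earlier index would already have been received, contradicting that $z$ is woken -- so it would have pushed $\tmop{lb}$ past $z$. Consequently $\Pr[z\text{ woken}]=q^{d(z)}$, where $d(z)=|\{w:z<w\le r,\ w\text{ broadcast before }z\}|$, and by linearity the expected number of lower-group wake-ups is $\sum_{z=0}^{r}q^{d(z)}$. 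The upper group is symmetric, with $d(z)$ replaced by the number of earlier-broadcast indices in $[[r+1,z-1]]$.

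Next I would prove $\sum_{z\le r}q^{d(z)}\le(2k+1)/p$ from the bit-reversal structure of Lemmas~\ref{X_i-Lemma}--\ref{union-X-Lemma}, by a recursion on $k$ parallel to the reliable counting of Lemmas~\ref{first-tree-Lemma}--\ref{total-Lemma}. Splitting the broadcast sequence into its two halves, the first half is order-isomorphic to the $(k-1)$-bit instance and contributes the corresponding sum $S_{k-1}$, while for every index $z$ in the second half $d(z)$ is at least the number of larger first-half indices, a count that increases by one as $z$ decreases through the half; the second half is therefore dominated by the geometric series $\sum_{j\ge0}q^{j}=1/p$. This gives $S_k\le S_{k-1}+1/p$ and, unrolled, $S_k\le 1+k/p\le(2k+1)/p$. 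Thus where the reliable proof charges at most one wake-up to each of the $\le 2k+1$ sub-segments $Y_{k,s,i,l}$, here each level instead contributes an expected, geometrically bounded number of wake-ups summing to $\le 1/p$; applying the symmetric bound to the upper group yields the claimed $(4k+2)/p$.

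The main obstacle is carrying out this last count for an arbitrary start slot $s$ and an arbitrary threshold $r$, where the clean even/odd halving is replaced by the segment decomposition $Y_{k,s,i},X_{k,s,i}$. I must verify that within each layer of that decomposition the exponents $d(z)$ still increase by at least one as the positions grow denser inside $[\tmop{lb},\tmop{ub}]$, so that the layer's contribution is again a geometric series of ratio $q$ summing to $\le1/p$, and that restricting to the prefix $[[0,r]]$ (respectively the suffix $[[r+1,n-1]]$) does not destroy this monotonicity. A secondary point requiring care is the equivalence of the second paragraph: that ``$z$ is woken up'' is genuinely a function of the independent coins of the strictly larger earlier lower-group indices only, which relies jointly on the $\tmop{lb}$-update rule and on upper-group receptions never lowering $\tmop{ub}$ below $r$.
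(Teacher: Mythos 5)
Your outline is correct, and at its core it is the same argument as the paper's: both rely on the segment decomposition $Y_{k,s,i,l}$ and on Lemmas~\ref{X_i-Lemma}--\ref{union-X-Lemma}, which guarantee that by the time the level $X_{k,s,i,l}$ is broadcast, a \emph{complete} residue class modulo $2^{k-l+1}$ (the union of the earlier levels, resp.\ the whole of $X_{k,s,i-1}$) has already been offered, so that each of the at most $2k+1$ levels contributes an expected $1/p$ misses on each side of the threshold. Where you genuinely differ is in how that per-level $1/p$ is extracted: the paper bounds $E[r'-\tmop{lb}]$ and $E[\tmop{ub}-r'']$, measured in units of $2^{k-l}$, by geometric random variables recording the first successful coin in the already-broadcast progression on each side of $r''$, whereas you compute the exact per-position wake-up probability $q^{d(z)}$ and sum by linearity. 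Your route is arguably cleaner and more informative --- in particular your justification that the largest successful $w\in(z,r'']$ broadcast before $z$ must itself have been woken is a point the paper does not spell out --- at the cost of having to re-verify, per level, that $d(z)$ grows by at least one per step of the level's progression restricted to $[[0,r'']]$; that is exactly what Lemma~\ref{union-X-Lemma} supplies, since consecutive elements of $X_{k,s,i,l}\cap[[0,r'']]$, read downward from $r''$, each see one more element of the complete earlier-broadcast residue class. So the obstacle you flag in your last paragraph does close. One caution: your halving recursion $S_k\le S_{k-1}+1/p$ is only literally valid for $s=0$; for arbitrary $s$ you must fall back, as you anticipate, on the unit count $1+k_0+\sum_{i\ge 1}(k_i-k_{i-1}+1)\le 2k+1$ from Lemma~\ref{total-Lemma}, which gives $(2k+1)/p$ per side and hence the claimed $(4k+2)/p$.
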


\begin{proof}
  Since $\tmop{KEYS} \cap [\kappa', \kappa''] = \emptyset$, we have $r' = r''
  + 1$. Let us use the notation from the proof of
  Theorem~\ref{first-hit-Theorem}.
  
  First consider the time-slots in \ $Y_0$. There is one wake-up in $Y_{0, 0}
  =\{t_0 \}$. For each \ $l \geq 0$, \ $\bigcup_{j = 0}^l X_{0, j} \subseteq
  \tmop{HX}_{\tmop{tt} (\min Y_{0, l + 1})}$. \ Hence, by
  Lemma~\ref{union-X-Lemma}, 
\begin{itemize}
  \item
  $\tmop{lb}_{\tmop{tt} (\min Y_{0, l + 1})} - 1
  \geq 
  \max \{- 1\} \cup \{i \in [[0, r' - 1]] \;|\; (i - \tmop{rev}_k (t_0))
  \tmop{mod} 2^{k - l} = 0 \wedge \tmop{success} (\tmop{tt} (\tmop{rev}_k
  (i)))\}$, and
\item 
  $\tmop{ub}_{\tmop{tt} (\min Y_{0, l + 1})} + 1 
  \leq 
  \min \{n\}
  \cup \{i \in [[r'' + 1, n - 1]] \;|\; (i - \tmop{rev}_k (t_0)) \tmop{mod} 2^{k -
  l} = 0 \wedge \tmop{success} (\tmop{tt} (\tmop{rev}_k (i)))\}$.
\end{itemize}
 Note that
  $[[0, r' - 1]] \cup [[r'' + 1, n - 1]] = [[0, n - 1]]$. Thus, $E
  [(\tmop{ub}_{\tmop{tt} (\min Y_{0, l + 1})} - \tmop{lb}_{\tmop{tt} (\min
  Y_{0, l + 1})}) / 2^{k - l}] < 2 / p$ -- the expected number of integers
  congruent modulo $2^{k - l}$ to $\tmop{rev}_k (t_0)$ in
  [[$\tmop{lb}_{\tmop{tt} (\min Y_{0, l + 1})}, \tmop{ub}_{\tmop{tt} (\min
  Y_{0, l + 1})}]]$. Since, by Lemma~\ref{X_il-Lemma}, all elements of $X_{0,
  l + 1}$ are congruent modulo $2^{k - l}$ to $\tmop{rev}_k (t_0 + \lfloor 2^l
  \rfloor)$, the expected number of wake-ups during time slots $Y_{0, l + 1}$
  is bounded by $2 / p$. Thus the expected number of wake-ups in $Y_0$ is not
  greater than $2 k_0 / p + 1 \leq 2 (k_0 + 1) / p$.
  
  Now consider $Y_i$, for $i \in [[1, \tmop{last}]]$. Since $X_{i - 1}
  \subseteq \tmop{HX}_{\tmop{tt} (\min Y_i)}$ and, by Lemma~\ref{X_i-Lemma},
  $X_{i - 1}$ contains all integers congruent modulo $2^{k - k_{i - 1}}$ to
  $\min X_{i - 1}$ and, by Lemma~\ref{union-X-Lemma}, $\bigcup_{j = 0}^{k_{i -
  1}} X_{i, j}$, contains only integers congruent modulo $2^{k - k_{i - 1}}$
  to $\min X_i$, the expected number of wake-ups in $ \bigcup_{j = 0}^{k_{i -
  1}} Y_{i, j}$ can be bound, as above, by $2 / p$.
  
  For each $l \in [[k_{i - 1} + 1, k_i]]$, we use $\bigcup_{j = 0}^{l - 1}
  X_{i, j} \subseteq \tmop{HX}_{\tmop{tt} (\min Y_{i, l})}$, to bound the
  expected number of wake-ups in $Y_{i, l}$ by $2 / p$. Thus the expected
  number of wake-ups in $Y_i$ is not greater than $2 (k_i - k_{i - 1} + 1) /
  p$.
  
  Summing up, as in the proof of Lemma~\ref{total-Lemma}, the expected
  number of wake-ups during the first cycle is at most $\frac{2}{p} (k_0 + 1 +
  \sum_{i = 0}^{\tmop{last}} (k_i - k_{i - 1} + 1)) \leq (4 k + 2) / p$.
\end{proof}

\begin{theorem}
  \label{Theorem-expected}The expected number of misses during the infinite
  execution of the protocol is not greater than $(8 k + 4) / p + 2 (1 - p) /
  p^2$.
\end{theorem}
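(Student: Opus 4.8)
The plan is to split the infinite execution at the end of the first receiver cycle and bound the two parts separately, each by one of the two preceding lemmas. Everything after the initial $n$ slots is handled directly by Lemma~\ref{Lemma-expected-trailer}, which contributes at most $2q/p^2 = 2(1-p)/p^2$ expected misses, where $q = 1-p$. It therefore remains to bound the expected number of misses during the first cycle by $(8k+4)/p$ and add the two bounds.

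For the first cycle I would reduce the general interval $[\kappa', \kappa'']$ to two one-sided searches whose key set is empty, exactly as in the proof of Corollary~\ref{Corollary-ee}. If $[\kappa',\kappa''] \cap \tmop{KEYS} = \emptyset$ then Lemma~\ref{Lemma-expected-init} already gives the stronger bound $(4k+2)/p \le (8k+4)/p$. Otherwise $-1 < r' \le r'' < n$, and I would choose $\gamma'$ and $\gamma''$ with $\kappa_{r'-1} < \gamma' < \kappa_{r'}$ and $\kappa_{r''} < \gamma'' < \kappa_{r''+1}$, so that both $[\gamma',\gamma']$ and $[\gamma'',\gamma'']$ are disjoint from $\tmop{KEYS}$. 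Every miss of the real search wakes the radio in a slot whose index lies either in $[[0,r'-1]]$ (a \emph{lower} miss, which can only raise $\tmop{lb}$) or in $[[r''+1,n-1]]$ (an \emph{upper} miss, which can only lower $\tmop{ub}$). I would bound the lower misses by the wake-ups of the $[\gamma',\gamma']$ search and the upper misses by the wake-ups of the $[\gamma'',\gamma'']$ search; applying Lemma~\ref{Lemma-expected-init} to each one-sided (empty) search yields at most $(4k+2)/p$ expected wake-ups apiece, hence at most $(8k+4)/p$ expected misses in the first cycle.

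The step I expect to be the main obstacle is making these two bounds rigorous in the randomized model, i.e. transferring Lemma~\ref{Lemma-expected-init} (stated for an empty interval) to the lower and upper misses of a nonempty search. In the deterministic setting Corollary~\ref{Corollary-ee} relies on the history identities $\tmop{HL}(\kappa',\kappa'',t) = \tmop{HL}(\gamma',\gamma',t)$ and $\tmop{HU}(\kappa',\kappa'',t) = \tmop{HU}(\gamma'',\gamma'',t)$; here I would run all three processes on the same realization of $\tmop{success}(\cdot)$ and argue that these identities persist under this coupling. The key observation is that a left-side index $i \le r'-1$ satisfies $i \le \tmop{ub}$ throughout the first cycle both in the real search (where $\tmop{ub} \ge r'' \ge r'-1$) and in the $[\gamma',\gamma']$ search, so whether slot $i$ is woken depends only on the common lower bound; an induction on the slots visited then shows that the lower misses of $[\kappa',\kappa'']$ coincide with the left-side wake-ups of $[\gamma',\gamma']$, and in particular are dominated by all wake-ups of that search. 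The symmetric argument, using $\tmop{lb} \le r' \le r''+1$, handles the upper misses via $[\gamma'',\gamma'']$. Once this coupling is established, summing the first-cycle bound $(8k+4)/p$ with the tail bound $2(1-p)/p^2$ from Lemma~\ref{Lemma-expected-trailer} gives the claimed estimate.
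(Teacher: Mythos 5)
Your proposal is correct and follows essentially the same route as the paper: split off the tail via Lemma~\ref{Lemma-expected-trailer}, and for the first cycle reduce a nonempty interval to the two empty one-sided searches $[\gamma',\gamma']$ and $[\gamma'',\gamma'']$, each bounded by Lemma~\ref{Lemma-expected-init}. Your explicit coupling on a common realization of $\tmop{success}(\cdot)$, showing that the lower (resp.\ upper) misses coincide with wake-ups of the corresponding one-sided search, is exactly the step the paper leaves implicit when it asserts that the first-cycle misses are at most $E_{\gamma'}+E_{\gamma''}$.
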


\begin{proof}
  If $\tmop{KEYS} \cap [\kappa', \kappa''] = \emptyset$, then the theorem
  follows directly from Lemmas~\ref{Lemma-expected-trailer} and
  \ref{Lemma-expected-init}.
  
  Consider the case $\tmop{KEYS} \cap [\kappa', \kappa''] \not=  \emptyset$.
  As in \ Corollary~\ref{Corollary-ee}, let $\gamma'$ and $\gamma''$ be key
  values such that $\kappa_{r' - 1} < \gamma' < \kappa_{r'}$ and $\kappa_{r''}
  < \gamma'' < \kappa_{r'' + 1}$. Let $\tmop{EX}_{\gamma}$ denotes the
  expected number of misses in the first cycle when the protocol is started
  for interval $[\gamma, \gamma]$. By Lemma~\ref{Lemma-expected-trailer},
  $\max \{E_{\gamma'}, E_{\gamma''} \} \leq (4 k + 2) / p$. The expected
  number of misses during the first cycle of the protocol for $[\kappa',
  \kappa'']$ is the sum of the expected number of misses on both sides of
  $[r', r'']$ which is not greater than $E_{\gamma'} + E_{\gamma''}$. 
\end{proof}

\section{\label{Section-implementation}Implementation issues}

We present an efficient algorithm for computing the time slot of the reception
of the next frame required by the protocol. The efficiency of this algorithm
is based on the observation that elements of $X_{k, s, i}$ are organized by
$\tmop{rev}_k$ into subsequent levels of an almost balanced binary search
tree.

\subsection{Binary search tree on $X_{k, s, i}$}

For $d \geq 0$, for any sequence $c = (c_1, \ldots, c_d) \in \{- 1, 1\}^d$,
let a {\tmem{descendant}} of $x$ by path $c$ be defined as 
$\tmop{dsc}_k(x, c) = x + \sum_{i = 1}^d 2^{k - i} \cdot c_i$. 
Note that $\tmop{dsc}_k (x,(c_1, c_2, \ldots, c_d)) = \tmop{dsc}_{k - 1} (\tmop{dsc}_k (x, (c_1)), (c_2,
\ldots, c_d))$. Note that $(\tmop{dsc}_k (x, (c_1, \ldots, c_d)) - x)
\tmop{mod} 2^{k - d} = 0$. Let a {\tmem{level at depth $d$ rooted at $x$}} be
defined as $L_{k, d} (x) =\{\tmop{dsc}_x (x, c) | c \in \{- 1, 1\}^d \}$. Let
a {\tmem{sub-tree of depth $d$ rooted at $x$}} be defined as $\tmop{ST}_{k, d}
(x) =\{\tmop{dsc}_x (x, c) | \exists_{d' \in [[0, d]]} c \in \{- 1, 1\}^{d'}
\}$. The following properties are easy to note without the proof:

\begin{lemma}
  \label{BST-properties}For $k \geq 0$, \ for \ $d \in [[0, k]]$, we have the
  following properties:
  \begin{enumeratealpha}
    \item $|L_{k, d} (x) | = 2^d$.
    
    \item \label{BST-L-ST}$L_{k, 0} (x) = \tmop{ST}_{k, 0} (x) =\{x\}$ and,
    for $d > 1$, $L_{k, d} (x) = \tmop{ST}_{k, d} (x) \setminus \tmop{ST}_{k,
    d - 1} (x)$.
    
    \item $| \tmop{ST}_{k, d} (x) | = 2^{d + 1} - 1$.
    
    \item $\tmop{ST}_{k, d} (x) =\{x + i \cdot 2^{k - d} | i \in [[- 2^d + 1,
    2^d - 1]]\}$.
    
    \item \label{BST-with-right-ST}If $d \geq 1$ then $\{x\} \cup \tmop{ST}_{k
    - 1, d - 1} (\tmop{dsc}_k (x, (1))) =\{x + i \cdot 2^{k - d} | i \in [[0,
    2^d - 1]]\}$.
    
    \item \label{BST-ST-root-ST} $\tmop{ST}_{k, d} (x) = \tmop{ST}_{k - 1, d -
    1} (\tmop{dsc}_k (x, (- 1))) \cup \{x\} \cup \tmop{ST}_{k - 1, d - 1} (x,
    \tmop{dsc}_k (x, (1)))$.
    
    \item \label{BST-inorder} $\max \tmop{ST}_{k - 1, d} (\tmop{dsc}_k (x, (-
    1))) + 2^{k - 1 - d} = x = \min \tmop{ST}_{k - 1, d} (\tmop{dsc}_k (x,
    (1))) - 2^{k - 1 - d}$.
  \end{enumeratealpha}
\end{lemma}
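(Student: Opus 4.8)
The plan is to reduce all seven properties to a single structural identity and then read (a)--(g) off it. The central observation is that for a path $c = (c_1, \ldots, c_d) \in \{- 1, 1\}^d$ one may factor out the common power of two:
\[
\tmop{dsc}_k (x, c) = x + 2^{k - d} \cdot \sigma (c), \qquad \sigma (c) = \sum_{i = 1}^d 2^{d - i} c_i .
\]
First I would determine the range of $\sigma$. Substituting $c_i = 2 b_i - 1$ with $b_i \in \{0, 1\}$ rewrites $\sigma (c)$ as $2 \sum_{i = 1}^d 2^{d - i} b_i - (2^d - 1)$; since $\sum_i 2^{d - i} b_i$ runs bijectively over $[[0, 2^d - 1]]$ as $(b_1, \ldots, b_d)$ ranges over $\{0, 1\}^d$, the map $c \mapsto \sigma (c)$ is a bijection from $\{- 1, 1\}^d$ onto the odd integers of $[[- 2^d + 1, 2^d - 1]]$. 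This already gives (a), since $|L_{k, d} (x)| = 2^d$.

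Next I would establish (f) and (d) together, as (d) is the workhorse for the rest. Property (f) is just the recursion $\tmop{dsc}_k (x, (c_1, \ldots, c_d)) = \tmop{dsc}_{k - 1} (\tmop{dsc}_k (x, (c_1)), (c_2, \ldots, c_d))$ read backwards: a nonempty path starts either with $c_1 = - 1$, landing in the left child $x - 2^{k - 1}$, or with $c_1 = 1$, landing in the right child $x + 2^{k - 1}$, and its suffix then builds a depth-$(d - 1)$ subtree in the $(k - 1)$-tree rooted at that child, while the empty path contributes $\{x\}$. With (f) in hand I would prove (d) by induction on $d$, the base case $d = 0$ being $\tmop{ST}_{k, 0} (x) = \{x\}$. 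For the step, the inductive hypothesis at $(k - 1, d - 1)$ gives $\tmop{ST}_{k - 1, d - 1} (x \pm 2^{k - 1}) = \{x \pm 2^{k - 1} + i \cdot 2^{k - d} \;|\; i \in [[- 2^{d - 1} + 1, 2^{d - 1} - 1]]\}$; writing $2^{k - 1} = 2^{d - 1} \cdot 2^{k - d}$ reindexes the right child's subtree to $\{x + j \cdot 2^{k - d} \;|\; j \in [[1, 2^d - 1]]\}$ and the left child's to $\{x + j \cdot 2^{k - d} \;|\; j \in [[- 2^d + 1, - 1]]\}$. Together with the root $j = 0$, these three disjoint pieces cover exactly $j \in [[- 2^d + 1, 2^d - 1]]$, which is (d); counting them gives $(2^d - 1) + 1 + (2^d - 1) = 2^{d + 1} - 1$, which is (c).

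Finally I would derive (b), (e) and (g) from (d). For (b), property (d) shows that $x + i \cdot 2^{k - d}$ lies in $L_{k, d'} (x)$ precisely when $i = 0$ (the case $d' = 0$) or $i$ has $2$-adic valuation $d - d'$ (the case $d' \geq 1$); hence distinct levels are disjoint, the subtree is their union, and $L_{k, d} (x)$ is exactly the odd-$i$ part $\tmop{ST}_{k, d} (x) \setminus \tmop{ST}_{k, d - 1} (x)$. Property (e) is simply the root together with the right-hand piece computed above: $\{x\} \cup \tmop{ST}_{k - 1, d - 1} (x + 2^{k - 1}) = \{x + j \cdot 2^{k - d} \;|\; j \in [[0, 2^d - 1]]\}$. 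For (g) I would apply (d) at parameters $(k - 1, d)$ to the children $x \mp 2^{k - 1}$: their extreme elements are $(x \mp 2^{k - 1}) \pm (2^d - 1) 2^{k - 1 - d}$, and since $2^{k - 1} = 2^d \cdot 2^{k - 1 - d}$ these reduce to $x \mp 2^{k - 1 - d}$, which yields the two identities after the stated shift by $2^{k - 1 - d}$. The only mildly delicate step in the whole argument is the sign-sum bijection of the first paragraph; everything after it is bookkeeping driven by (d).
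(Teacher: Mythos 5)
Your proof is correct. Note that the paper offers no argument for this lemma at all---it declares the properties ``easy to note without the proof''---so there is no authorial proof to compare against; your writeup supplies exactly the details the paper omits. The normal form $\tmop{dsc}_k (x, c) = x + 2^{k - d} \sigma (c)$ with $\sigma (c) = \sum_{i = 1}^d 2^{d - i} c_i$, together with the observation that $c \mapsto \sigma (c)$ is a bijection from $\{- 1, 1\}^d$ onto the odd integers of $[[- 2^d + 1, 2^d - 1]]$, is the right organizing device: it gives (a) immediately, makes (d) a short induction once the root/left-subtree/right-subtree decomposition (f) is read off the recursion $\tmop{dsc}_k (x, (c_1, \ldots, c_d)) = \tmop{dsc}_{k - 1} (\tmop{dsc}_k (x, (c_1)), (c_2, \ldots, c_d))$ that the paper itself records, and reduces (b), (c), (e), (g) to index bookkeeping, with (b) falling out of the parity (equivalently, the $2$-adic valuation) of the index $i$ in (d). Two small remarks that concern the lemma's statement rather than your argument: your proof in fact establishes (b) for all $d \geq 1$, not merely $d > 1$ as written, and this stronger form is what the later lemmas actually use; and (g) is only meaningful for $d \leq k - 1$, since it invokes subtrees with parameters $(k - 1, d)$.
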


Lemma~\ref{X-BST} shows that each $X_{k, s, i}$ is organized by $\tmop{rev}_k$
in a binary search tree with the root at $\min X_{k, s, i} = \tmop{rev}_k
(t_{k, s, i})$, without the left sub-tree and with a totally balanced right
sub-tree, see Figure~\ref{BST-fig}.

Lemma~\ref{X-BST}\ref{parent-above-child} states that the elements of the 
levels closer to the root
have lower values of their $k$-bit reversals than 
the elements of the more distant levels.

\begin{figure}[h]
\centering{\resizebox{15cm}{!}{
  \includegraphics{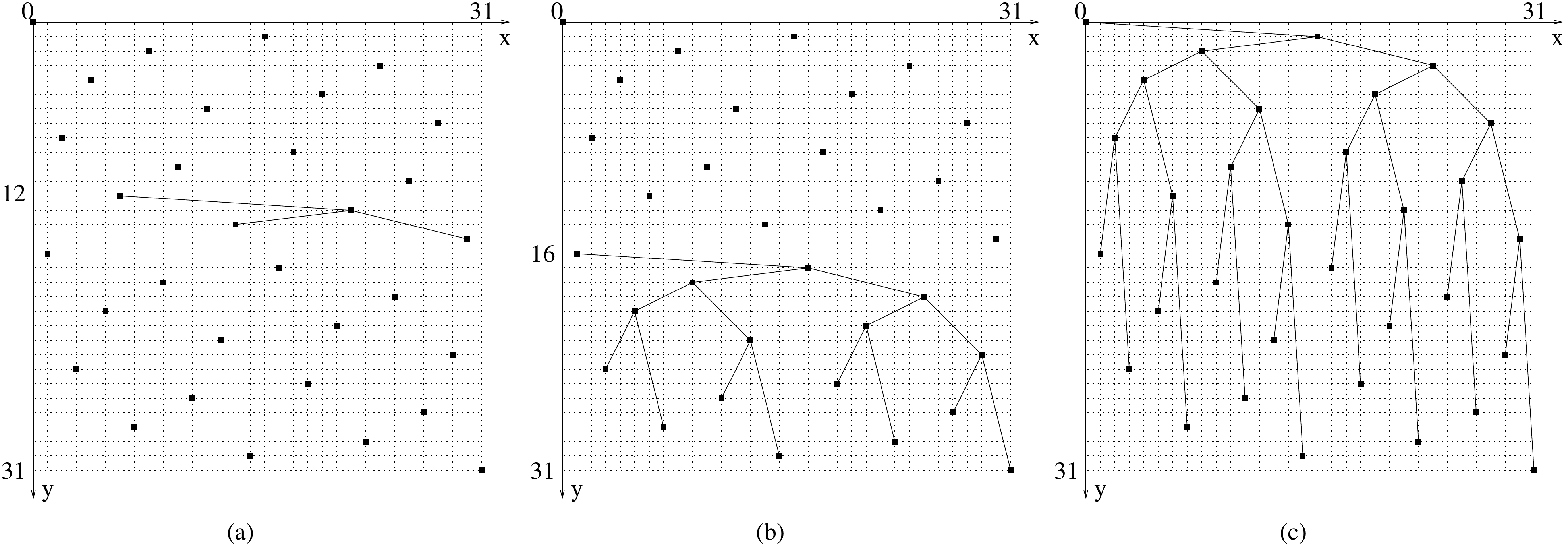}
}}
  \caption{\label{BST-fig}The binary search trees for $X_{5, 12, 0}$ (a),
  $X_{5, 12, 1}$ (b) and $X_{5, 12, 2}$ (c), on the graph of $y = \tmop{rev}_5
  (x)$. Note that the $y$ axis of the graph is directed downwards.}
\end{figure}

\begin{lemma}
  \label{X-BST} For $k \geq 0$, $t \in [[0, 2^k - 1]]$, $i \in [[0,
  \tmop{last}_{k, s}]]$, let $r = \tmop{rev}_k (t_{k, s, i})$ and $l = k_{k,
  s, i}$.
  
  Then we have:
  \begin{enumeratealpha}
    {\color{black} \item \label{upper-levels} $X_{k, s, i, 0} =\{r\}$ and, for
    $\text{$d \in [[1, l]]$}, \bigcup_{j = 0}^d X_{k, s, i, j} =\{r\} \cup
    \tmop{ST}_{k - 1, d - 1} (\tmop{dsc}_k (r, (1)))$.}
    
    \item \label{X-BST-structure} If $l > 0$ then $X_{k, s, i} =\{r\} \cup
    \tmop{ST}_{k - 1, l - 1} (\tmop{dsc}_k (r, (1)))$. If $l = 0$ then $X_{k,
    s, i} =\{r\}$.
    
    {\color{black} \item \label{X-BST-level}$X_{k, s, i, 0} =\{r\}$ and, for
    $d \in [[1, l]]$, \ $X_{k, s, i, d} = L_{k - 1, d - 1} (\tmop{dsc}_k (r,
    (1)))$.}
    
    \item \label{parent-above-child}If $c' \in \{- 1, 1\}^{d'}$ and $c'' \in
    \{- 1, 1\}^{d''}$, where $0 \leq d' < d'' \leq l$, and $x' = \tmop{dsc}_k
    (r, c')$ and $x'' = \tmop{dsc}_k (r, c'')$ and \ $x', x'' \in X_{k, s, i}$
    and $y' = \tmop{rev}_k (x')$ and $y'' = \tmop{rev}_k (x'')$ then $y' <
    y''$.
  \end{enumeratealpha}
\end{lemma}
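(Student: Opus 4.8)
The plan is to read every part off the explicit set descriptions already established, matching the pieces of $X_{k, s, i}$ from Lemmas~\ref{X_i-Lemma}, \ref{X_il-Lemma}, \ref{union-X-Lemma} against the arithmetic-progression descriptions of $\tmop{ST}$ and $L$ in Lemma~\ref{BST-properties}. Throughout write $\rho = \tmop{dsc}_k (r, (1)) = r + 2^{k - 1}$. For part~\ref{upper-levels} I would apply Lemma~\ref{union-X-Lemma} with its last parameter equal to $d$ (allowed since $d \leq l$) together with $\tmop{rev}_k (t_{k, s, i}) = r$, obtaining $\bigcup_{j = 0}^d X_{k, s, i, j} = \{r + 2^{k - d} x' \mid x' \in [[0, 2^d - 1]]\}$; Lemma~\ref{BST-properties}\ref{BST-with-right-ST} with $x = r$ describes $\{r\} \cup \tmop{ST}_{k - 1, d - 1} (\rho)$ as the very same progression, so the two sets coincide. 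Part~\ref{X-BST-structure} is then the case $d = l$, using that $X_{k, s, i} = \bigcup_{j = 0}^l X_{k, s, i, j}$ is a disjoint union (because $Y_{k, s, i}$ is the disjoint union of the $Y_{k, s, i, j}$); the degenerate case $l = 0$ is $X_{k, s, i, 0} = \{r\}$, immediate from Lemma~\ref{X_il-Lemma}.

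For part~\ref{X-BST-level} I would use disjointness of the levels to write $X_{k, s, i, d} = \bigl( \bigcup_{j = 0}^d X_{k, s, i, j} \bigr) \setminus \bigl( \bigcup_{j = 0}^{d - 1} X_{k, s, i, j} \bigr)$ and substitute the description from part~\ref{upper-levels} (for $d \geq 2$) or $X_{k, s, i, 0} = \{r\}$ (for $d = 1$). Since $\tmop{ST}_{k - 1, d - 1} (\rho)$ has minimum $\rho - (2^{d - 1} - 1) 2^{k - d} = r + 2^{k - d} > r$ by the arithmetic-progression description of subtrees in Lemma~\ref{BST-properties}, the root $r$ drops out cleanly and the difference is $\tmop{ST}_{k - 1, d - 1} (\rho) \setminus \tmop{ST}_{k - 1, d - 2} (\rho)$, which equals $L_{k - 1, d - 1} (\rho)$ by Lemma~\ref{BST-properties}\ref{BST-L-ST}.

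Part~\ref{parent-above-child} is the crux, and the idea is to transport the comparison of bit-reversals back to the index sets $Y_{k, s, i, d}$ instead of computing reversals directly. Because $\tmop{rev}_k$ is an involution on $[[0, 2^k - 1]]$ and $X_{k, s, i, d} = \tmop{rev}_k (Y_{k, s, i, d})$, the bit-reversals of the depth-$d$ elements form exactly $Y_{k, s, i, d} = [[t_{k, s, i} + \lfloor 2^{d - 1} \rfloor, t_{k, s, i} + 2^d - 1]]$, and these intervals are disjoint and strictly increasing in $d$, since $t_{k, s, i} + 2^{d'} - 1 < t_{k, s, i} + 2^{d'' - 1}$ whenever $d' < d''$. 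Hence it suffices to show that a descendant $x = \tmop{dsc}_k (r, c)$ with $|c| = d$ and $x \in X_{k, s, i}$ actually sits on level $X_{k, s, i, d}$. For $d = 0$ this is $x = r \in X_{k, s, i, 0}$. For $d \geq 1$ the first step of the path cannot go left: since $r < 2^{k - l} \leq 2^{k - 1}$ by Lemma~\ref{X_i-Lemma}, a left-first path gives $x \leq r - 2^{k - 1} + \sum_{i = 2}^d 2^{k - i} < r$, whereas by part~\ref{X-BST-structure} every element of $X_{k, s, i} \setminus \{r\}$ is at least $r + 2^{k - l} > r$; thus $c_1 = 1$. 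The composition rule for $\tmop{dsc}$ then gives $x = \tmop{dsc}_{k - 1} (\rho, (c_2, \ldots, c_d)) \in L_{k - 1, d - 1} (\rho) = X_{k, s, i, d}$ by part~\ref{X-BST-level}. Applying this to $x'$ and $x''$ puts $y' \in Y_{k, s, i, d'}$ and $y'' \in Y_{k, s, i, d''}$, and since $d' < d''$ the interval ordering yields $y' < y''$.

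The only real obstacle is this level-identification step in part~\ref{parent-above-child}: all the set equalities are routine matching, but to conclude $x \in X_{k, s, i, d}$ one must rule out a left-first descent, which rests on the bound $r < 2^{k - l}$ together with the fact that $X_{k, s, i}$ lies weakly to the right of its root. Once that is secured, the reduction to the consecutive index intervals $Y_{k, s, i, d}$ makes the required inequality on bit-reversals immediate, and no further computation with $\tmop{rev}_k$ is needed.
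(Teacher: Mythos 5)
Your proof is correct and follows essentially the same route as the paper's: parts (a)--(c) by matching the arithmetic-progression descriptions from Lemma~\ref{union-X-Lemma} and Lemma~\ref{BST-properties}, and part (d) by placing each descendant on its level $X_{k,s,i,d}$ and then using the ordering of the intervals $Y_{k,s,i,d}$. Your explicit ruling-out of a left-first descent (via $\min X_{k,s,i}=r$) makes precise a step the paper leaves implicit, but the argument is the same.
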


\begin{proof}

  Lemma~\ref{X-BST}\ref{upper-levels}: By
  Lemma~\ref{BST-properties}\ref{BST-with-right-ST} $\{r\} \cup \tmop{ST}_{k
  - 1, d - 1} (\tmop{dsc}_k (r, (1))) =\{r + i \cdot 2^{k - d} | i \in [[0,
  2^d]]\}$ which, by Lemma~\ref{union-X-Lemma}, is equal to $\bigcup_{j = 0}^d
  X_{k, s, i, j}$.
  
  Lemma~\ref{X-BST}\ref{X-BST-structure} follows from $X_{k, s, i} =
  \bigcup_{j = 0}^l X_{k, s, i, j}$ and from
  Lemma~\ref{X-BST}\ref{upper-levels}.
  
  Lemma~\ref{X-BST}\ref{X-BST-level} follows from \
  Lemma~\ref{X-BST}\ref{upper-levels} and from
  Lemma~\ref{BST-properties}\ref{BST-L-ST}.
  
  Lemma~\ref{X-BST}\ref{parent-above-child}: If $d' = 0$ then $x' = r$ and
  the lemma follows, since $t_{k, s, i} = \min Y_{k, s, i}$. Otherwise, we
  have \ $0 < d' < d''$, $x' \not=  r$ and $x'' \not=  r$ and, by
  Lemma~\ref{X-BST}\ref{X-BST-structure}, $x', x'' \in \tmop{ST}_{k - 1, l -
  1} (\tmop{dsc}_k (r, (1)))$. Thus $x' \in L_{k - 1, d' - 1} (\tmop{dsc}_k
  (r, (1)))$ and $x'' \in L_{k - 1, d'' - 1} (\tmop{dsc}_k (r, (1)))$. By
  Lemma~\ref{X-BST}\ref{X-BST-level}, $x' \in X_{k, s, i, d'}$ \ and $x'' \in
  X_{k, s, i, d''}$. To conclude, note that $\max Y_{k, s, i, d'} < \min Y_{k,
  s, i, d''}$. 
\end{proof}

\subsection{\label{Section-nsi}Implementation of $\tmop{nsi}$}

In realistic implementation, after each reception, the receiver has to compute
the next time slot with the index of the transmitted key \ in the interval
$[\tmop{lb}, \tmop{ub}]$, and switch off the radio for the time remaining to
this event.

By $\tmop{nsi}_k (t, r_1, r_2)$ we denote the next slot number (modulo $2^k$)
after the slot $t$ with its $k$-bit reversal in $[r_1, r_2]$: \ For \ $r_1,
r_2 \in [[0, 2^k - 1]]$, $r_1 \leq r_2$, and $t \in [[0, 2^k - 1]]$,
$\tmop{nsi}_k (t, r_1, r_2) = (t + \tau (t, r_1, r_2)) \tmop{mod} 2^k$, \
where $\tau (t, r_1, r_2) = \min \{d > 0 | \tmop{rev}_k ((t + d) \tmop{mod}
2^k) \in [[r_1, r_2]]\}$.

One could naively test subsequent values after $t$ or all values in
$\tmop{rev}_k [[r_1, r_2]]$. However, both these methods are time consuming,
when both $2^k / (r_2 - r_1)$ and $r_2 - r_1$ are large.

We present an efficient algorithm for the computation of $\tmop{nsi}_k (t,
r_1, r_2)$:
\begin{enumeratenumeric}
  \item $t'' \leftarrow (t + 1) \tmop{mod} 2^k$
  
  \item $l \leftarrow 0$
  
  \item \label{repeat-line}repeat
  \begin{enumeratealpha}
    \item $t' \leftarrow t''$
    
    \item \label{internal-while}while $l < k \wedge t' \tmop{mod} 2^{l + 1} =
    0$ \ do \ $l \leftarrow l + 1$
    
    \item $x_1 \leftarrow \tmop{rev}_k (t')$
    
    \item $t'' \leftarrow (t' + 2^l) \tmop{mod} 2^k$
    
    \item $x_2 \leftarrow \tmop{rev}_k (t' + 2^l - 1)$
  \end{enumeratealpha}
  \item \label{until-line}until $r_1 \leq x_2 \wedge r_2 \geq x_1 \wedge
  \lceil (r_1 - x_1) / 2^{k - l} \rceil \leq \lfloor (r_2 - x_1) / 2^{k - l}
  \rfloor$
  
  \item \label{bin-search-start}$c \leftarrow 2^{k - 1}$
  
  \item \label{bin-search-while}while $x_1 < r_1 \vee x_1 > r_2$ do
  \begin{enumeratealpha}
    \item if $x_1 < r_1$ then $x_1 \leftarrow x_1 + c$ else $x_1 \leftarrow
    x_1 - c$
    
    \item $c \leftarrow c / 2$
  \end{enumeratealpha}
  \item return $\tmop{rev}_k (x_1)$
\end{enumeratenumeric}

Correctness of the algorithm:

Let $s = (t + 1) \tmop{mod} 2^k$. Let the iterations of the ``repeat-until''
loop be numbered starting from zero. After the $i$th iteration, at line
\ref{until-line}, we have $l = l_{k, s, i}$, $t' = t_{k, s, i}$, \ $x_1 = \min
X_{k, s, i} = \tmop{rev}_k (t_{k, s, i})$, $x_2 = \max X_{k, s, i}$, and $t''
= t_{k, s, i + 1}$. Let $i' = \min \{i \geq 0 | X_{k, s, i} \cap [r_1, r_2]
\not=  \emptyset\}$. Since $r_1, r_2 \in [[0, 2^k - 1]]$, \ $r_1 \leq r_2$,
and $X_{k, s, \tmop{last}_{k, s}} = [[0, 2^k - 1]]$, we have $0 \leq i' \leq
\tmop{last}_{k, s}$. Thus, by Lemma~\ref{X_i-Lemma}, $i'$ is the number of the
first iteration, after which$ r_1 \leq x_2 \wedge r_2 \geq x_1 \wedge \min \{j
| x_1 + 2^{k - l} \cdot j \geq r_1 \} \leq \max \{j | x_1 + 2^{k - l} \cdot j
\leq r_2 \}$, which is equivalent to $ r_1 \leq x_2 \wedge r_2 \geq x_1 \wedge
\lceil (r_1 - x_1) / 2^{k - l} \rceil \leq \lfloor (r_2 - x_1) / 2^{k - l}
\rfloor$.

After the ``repeat-until'' loop finishes, at line \ref{bin-search-start}, we
have $x_1 = \tmop{rev}_k (t_{k, s, i'})$ and, by
Lemma~\ref{X-BST}\ref{X-BST-structure}, $X_{k, s, i'} =\{x_1 \} \cup S$,
where either $S = \tmop{ST}_{k - 1, l - 1} (\tmop{dsc}_k (x_1, (1)))$, if $l >
0$, or $S = \emptyset$, if $l = 0$. Since $X_{k, s, i'} \cap [r_1, r_2]
\not= \emptyset$, we do a binary search in $X_{k, s, i'}$ until we enter the
interval $[r_1, r_2]$ for the first time. By the
Lemma~\ref{X-BST}\ref{parent-above-child}, the returned value is $\min
\{\tmop{rev}_k (x) | x \in X_{k, s, i'} \}$.

Complexity of the algorithm:

The memory complexity: Only the constant number of $k$-bit variables are used.

The time complexity: The number of iterations of the ``repeat-until'' loop is
never greater than $k + 1$. Since the value of $l$ never decreases, the total
number of iterations of the internal ``while'' loop
(line~\ref{repeat-line}\ref{internal-while}) in all iterations of the
``repeat-until'' loop is never grater than $k + 1$. The total number of
iterations of the binary search loop (starting at line~\ref{bin-search-while})
is never greater than $k$. Thus the total complexity is $O (k)$ elementary
operations on $k$-bit integers.

Multiplication, division and modulo operations by the powers of two can be
replaced by shifting or bit-masking operations. The implementation of this
algorithm in programming language, with optimizations of bit-wise operations
can be found on {\cite{RBO-WWW}}.

Some technical aspects of the implementation, such as dealing with imperfect
synchronization and proposed structure of the frame header has been discussed
in technical report {\cite{DBLP:journals/corr/abs-1108-5095}}.


\begin{thebibliography}{10}
  \bibitem[1]{DBLP:journals/isci/ChungL07}Yon~Dohn Chung and Ji~Yeon Lee.
  {\newblock}An indexing method for wireless broadcast XML data.
  {\newblock}\tmtextit{Inf. Sci.}, 177(9):1931--1953, 2007.
  
  \bibitem[2]{DBLP:journals/tkde/ChungYK10}Yon~Dohn Chung, Sanghyun Yoo, and
  Myoung-Ho Kim. {\newblock}Energy- and latency-efficient processing of
  full-text searches on a wireless broadcast stream. {\newblock}\tmtextit{IEEE
  Trans. Knowl. Data Eng.}, 22(2):207--218, 2010.
  
  \bibitem[3]{CooleyTukey}James Cooley and John Tukey. {\newblock}An algorithm
  for the machine calculation of complex Fourier series.
  {\newblock}\tmtextit{Mathematics of Computation}, 19(90):297--301, 1965.
  
  \bibitem[4]{CormenLR89}Thomas~H. Cormen, Charles~E. Leiserson, and Ronald~L.
  Rivest. {\newblock}\tmtextit{Introduction to Algorithms}. {\newblock}The MIT
  Press and McGraw-Hill Book Company, 1989.
  
  \bibitem[5]{DBLP:journals/tods/DattaVCK99}Anindya Datta, Debra~E.
  VanderMeer, Aslihan Celik, and Vijay Kumar. {\newblock}Broadcast protocols
  to support efficient retrieval from databases by mobile users.
  {\newblock}\tmtextit{ACM Trans. Database Syst.}, 24(1):1--79, 1999.
  
  \bibitem[6]{DBLP:conf/infocom/FukuchiSSH09}Daisuke Fukuchi, Christian
  Sommer, Yuichi Sei, and Shinichi Honiden. {\newblock}Distributed arrays: A
  p2p data structure for efficient logical arrays. {\newblock}In
  \tmtextit{INFOCOM}, pages 1458--1466. IEEE, 2009.
  
  \bibitem[7]{DBLP:journals/ton/HefeedaH10}Mohamed Hefeeda and Cheng-Hsin Hsu.
  {\newblock}On burst transmission scheduling in mobile TV broadcast networks.
  {\newblock}\tmtextit{IEEE/ACM Trans. Netw.}, 18(2):610--623, 2010.
  
  \bibitem[8]{DBLP:conf/sigmod/ImielinskiVB94}Tomasz Imielinski,
  S.~Viswanathan, and B.~R. Badrinath. {\newblock}Energy efficient indexing on
  air. {\newblock}In Richard~T. Snodgrass and Marianne Winslett, editors,
  \tmtextit{SIGMOD Conference}, pages 25--36. ACM Press, 1994.
  
  \bibitem[9]{DBLP:conf/edbt/ImielinskiVB94}Tomasz Imielinski, S.~Viswanathan,
  and B.~R. Badrinath. {\newblock}Power efficient filtering of data an air.
  {\newblock}In Matthias Jarke, Janis A.~Bubenko Jr., and Keith~G. Jeffery,
  editors, \tmtextit{EDBT}, volume 779 of \tmtextit{Lecture Notes in Computer
  Science}, pages 245--258. Springer, 1994.
  
  \bibitem[10]{DBLP:journals/tkde/ImielinskiVB97}Tomasz Imielinski,
  S.~Viswanathan, and B.~R. Badrinath. {\newblock}Data on air: Organization
  and access. {\newblock}\tmtextit{IEEE Trans. Knowl. Data Eng.},
  9(3):353--372, 1997.
  
  \bibitem[11]{Khanna2000575}Sanjeev Khanna and Shiyu Zhou. {\newblock}On
  indexed data broadcast. {\newblock}\tmtextit{Journal of Computer and System
  Sciences}, 60(3):575 -- 591, 2000.
  
  \bibitem[12]{RBO-WWW}Marcin Kik.
  {\newblock}\tmtexttt{http://sites.google.com/site/rboprotocol/}.
  
  \bibitem[13]{DBLP:conf/adhoc-now/Kik08}Marcin Kik. {\newblock}Ranking and
  sorting in unreliable single hop radio network. {\newblock}In David Coudert,
  David Simplot-Ryl, and Ivan Stojmenovic, editors, \tmtextit{ADHOC-NOW},
  volume 5198 of \tmtextit{Lecture Notes in Computer Science}, pages 333--344.
  Springer, 2008.
  
  \bibitem[14]{DBLP:journals/corr/abs-1108-5095}Marcin Kik. {\newblock}RBO
  protocol: Broadcasting huge databases for tiny receivers.
  {\newblock}\tmtextit{CoRR}, abs/1108.5095, 2011.
  
  \bibitem[15]{TinyOSProgramming}Philip Levis and David Gay.
  {\newblock}\tmtextit{TinyOS Programming}. {\newblock}Cambridge University
  Press, New York, NY, USA, 2009.
  
  \bibitem[16]{ZorderMorton}G.M. Morton. {\newblock}A computer oriented
  geodetic data base and a new technique in file sequencing.
  {\newblock}\tmtextit{IBM technical report}. {\newblock}Ottawa, Canada, 1966.
  
  \bibitem[17]{DBLP:journals/jacm/ShalevS06}Ori Shalev and Nir Shavit.
  {\newblock}Split-ordered lists: Lock-free extensible hash tables.
  {\newblock}\tmtextit{J. ACM}, 53(3):379--405, 2006.
  
  \bibitem[18]{DBLP:conf/scopes/ShaoD05}Jun Shao and Brian~T. Davis.
  {\newblock}The bit-reversal SDRAM address mapping. {\newblock}In Krishna~M.
  Kavi and Ron Cytron, editors, \tmtextit{SCOPES}, volume 136 of \tmtextit{ACM
  International Conference Proceeding Series}, pages 62--71, 2005.
  
  \bibitem[19]{DBLP:journals/winet/VaidyaH99}Nitin~H. Vaidya and Sohail
  Hameed. {\newblock}Scheduling data broadcast in asymmetric communication
  environments. {\newblock}\tmtextit{Wireless Networks}, 5(3):171--182, 1999.
\end{thebibliography}
\end{document}